\theoremstyle{plain}
\newtheorem{theorem}{Theorem}[section]
\newcommand{\totS}{\mathcal{S}}
\newcommand{\totSb}{\bm{\mathcal{S}}}
\begin{document}

\preprint{APS/123-QED}

\title{Quantum many-body scars in spin models with multibody interactions}%
\author{Kazuyuki Sanada}
\affiliation{Department of Physics, Graduate School of Science, The University of Tokyo, 7-3-1 Hongo, Bunkyo-ku, Tokyo 113-0033, Japan}

\author{Yuan Miao}
\affiliation{Galileo Galilei Institute for Theoretical Physics, INFN, Largo Enrico Fermi 2, 50125 Firenze, Italy}
\author{Hosho Katsura}%

 \affiliation{Department of Physics, Graduate School of Science, The University of Tokyo, 7-3-1 Hongo, Bunkyo-ku, Tokyo 113-0033, Japan}
\affiliation{Institute for Physics of Intelligence, The University of Tokyo, 7-3-1 Hongo, Bunkyo-ku, Tokyo 113-0033, Japan}
\affiliation{Trans-scale Quantum Science Institute, The University of Tokyo, Bunkyo-ku, Tokyo 113-0033, Japan}

\date{\today}

\begin{abstract}
We introduce and study several classes of quantum spin models with multibody interactions that exhibit quantum many-body scars. The models are constructed by two different methods: one exploiting boundary states in integrable spin chains and the other based on a variant of existing methods such as restricted spectrum generating algebras. The first method allows us to construct deformations of the Majumdar-Ghosh and Affleck-Kennedy-Lieb-Tasaki models---prototypes of frustration-free systems. With the second method, we construct a large class of spin-$1$ models involving scalar spin chirality in both one and two dimensions. Interestingly, in some cases, the models so constructed have towers of scar states of different character. For each example, we show that the scar states behave differently from thermal states by comparing their spectral and dynamical properties with those of other states. We also show that a superposition of the scar states constructed by the second method exhibits perfectly periodic revivals in the dynamics.

\end{abstract}

\maketitle

\section{Introduction}
Since the early days of quantum mechanics, thermalization of isolated quantum systems has been of great theoretical interest, as it is at the 
heart of statistical mechanics. 
Recently, thanks to the development of quantum simulators such as systems with ultracold atoms~\cite{ultracold}, superconducting circuits~\cite{superconductive}, trapped ions~\cite{trapped}, and Rydberg atoms~\cite{Bernien}, we have been able to delve into the quantum many-body dynamics in detail, leading to a better understanding of the nature of thermalization. 
Theoretically, the eigenstate thermalization hypothesis (ETH) was introduced as a plausible mechanism to explain thermalization phenomena in isolated quantum systems~\cite{von2010proof, goldstein2010long, rigol2012alternatives}, and was subsequently discussed in a number of works such as Refs.~\cite{tasaki1998quantum, deutsch1991quantum, srednicki1994chaos, horoi1995chaos, zelevinsky1996nuclear}. Roughly speaking, the ETH is a quantum counterpart of ergodicity in classical systems~\cite{venuti2019ergodicity, deutsch2018eigenstate}.
The strong version of ETH asserts that all energy eigenstates are thermal states~\footnote{In contrast, the weak ETH claims that almost all energy eigenstates are thermal, which allows 
a small number of exceptional eigenstates called nonthermal states. Actually, the weak ETH has been proved in some cases~\cite{biroli2010effect, iyoda2017fluctuation}.}, 
which are locally indistinguishable from the microcanonical average. It has been confirmed by numerical calculations that the ETH is valid for many isolated quantum systems~\cite{rigol2008thermalization, polkovnikov2011colloquium, nandkishore2015many}.

Even though the ETH has been tested and confirmed in many studies, 
it does not hold in some special cases~\footnote{Unfortunately, it was reported that there is no general theorem, algorithm, or systematic procedure to determine whether any given quantum many-body system thermalizes or not~\cite{shiraishi2021undecidability}.}.
For example, quantum integrable models and many-body localized systems are known to violate the strong ETH~\cite{nandkishore2015many}. Moreover, there are systems that do not have these characteristics but still have eigenstates that do not thermalize. These nonthermal states are called quantum many-body scars (QMBS)~\cite{serbyn2021quantum, regnault2022quantum, chandran2023quantum}.

{
The signatures of QMBS have been observed in recent experiments~\cite{Bernien, su2023observation, zhang2023many} and several experimental platforms to realize QMBS have been proposed~\cite{hudomal2020quantum, zhao2020quantum, desaules2021proposal, kunimi2023proposal}. Of particular interests are systems involving Rydberg atoms~\cite{Bernien}.
}
Such systems exhibit nonthermal dynamics, despite being non-integrable~\cite{Turner, choi2019emergent}. 
The theoretical understanding of QMBS has progressed rapidly in recent years, and to date, many models with exact QMBS have been known. Examples include the PXP model~\cite{lin2019exact, shiraishi2019connection, lin2020quantum}, the Affleck-Kennedy-Lieb-Tasaki (AKLT) model~\cite{Moudgalya, Mark, o2020tunnels}, the Ising- and XY-like models~\cite{schecter2019weak, Iadecola, Chattopadhyay}, the perturbed Potts model~\cite{moudgalya2020large}, and the Onsager scars~\cite{SYK} (see Ref.~\cite{regnault2022quantum} for a review). They motivated the development of systematic 
methods for constructing concrete models with exact QMBS~\cite{Shiraishi-Mori, mcclarty2020disorder, pakrouski2020many, pakrouski2021group, ren2021quasisymmetry, tang2021multi, wildeboer2022quantum, ren2022deformed, omiya2023fractionalization}.
{Also, the fate of exact QMBS under perturbations has been a subject of debate~\cite{Turner2, lin2020slow, gotta2023asymptotic}.}  
For a more mathematical approach, an attempt has been made to comprehensively understand QMBS using commutant algebras~\cite{moudgalya2022exhaustive, moudgalya2022hilbert}.
However, despite these developments, the overall picture is far from complete. 
Therefore, to better understand the general framework and origin of QMBS, it is important to explore different methods for constructing new models that host QMBS in a systematic manner. 

In this paper, we introduce and study several classes of spin models with multibody interactions that exhibit QMBS. To construct the models, we employ two different methods: one based on integrable boundary states~\cite{de2015one, Piroli2017WhatIA, de2018scalar, Pozsgay, piroli2019integrable, pozsgay2019integrable}, and the other using a variant of the existing methods based on restricted spectrum generating algebras~\cite{moudgalya2020eta, buvca2019non} or quasi-symmetry groups~\cite{ren2021quasisymmetry}. The first method allows one to construct an infinite family of models with a scar state. However, since this approach heavily relies on the integrability of some terms in the Hamiltonian, its application is limited to one dimension. In addition, with this method, one cannot obtain a tower of scar states with equal energy spacing. In contrast, the second method allows for the construction of models with towers of scar states. We will demonstrate that a superposition of these scar states shows perfectly periodic revivals in the dynamics. Unlike the first method, the second method is capable of constructing models in higher dimensions. We will illustrate this using a model on a triangular lattice as an example. 
It should be noted that both methods allow the models to accommodate designed inhomogeneities that do not affect QMBS. 

The paper is organized as follows. In Sec. \ref{subsec:method_of_construction}, we explain 
the two methods in more detail. In Sec. \ref{subsec:numerical_verification}, we discuss how to distinguish QMBS from thermal states. In Sec. \ref{sec_MG+CSC}, we consider the spin-$1/2$ Majumdar-Ghosh model deformed by the spin-1/2 scalar spin chirality as an example of a scarred model constructed by the first method. 
In Sec. \ref{sec_AKLT+H3}, we show another example constructed by the same method, namely, the spin-$1$ AKLT model deformed by the third conserved quantity of the ${\rm SU}(3)$ Sutherland model. We also discuss possible generalizations to higher spins. 
In Sec. \ref{sec:AKLT+CSC}, we introduce a model consisting of the AKLT Hamiltonian and the spin-$1$ scalar spin chirality as an example of a model constructed by the second method. 
In Sec. \ref{sec_CSC}, we first construct exact zero-energy eigenstates of the spin-$1$ scalar spin chirality term. Then we show that they form towers of scar states in a class of models obtained by perturbing the scalar spin chirality by tailored disorder and discuss that they are examples of models constructed by the second method.  
We conclude with a summary and some open questions in Sec. \ref{sec_discussion}. Some technical details are relegated to the Appendices. 

\section{Methods}\label{sec:method}
\subsection{Construction of scarred models}\label{subsec:method_of_construction}
To construct models with QMBS, we adopt the following two methods (i) the method based on integrable boundary states~\cite{Piroli2017WhatIA, de2018scalar, Pozsgay, piroli2019integrable, pozsgay2019integrable}, and (ii) the method relying on a tower of states generated by some operator. First, let us describe (i), which is deeply related to quantum integrable systems. 
Let $H$ be a nearest-neighbor integrable Hamiltonian. One can construct an infinite number of conserved quantities $Q_n$ successively starting from $Q_2 \propto H$ by $Q_{n+1} = [B, Q_n]$, where $B$ is the boost operator~\cite{sklyanin1992quantum, GRABOWSKI1995299, de2019classifying}
\footnote{The boost operator also works for non-difference-form R matrices, see Ref.~\cite{de2019classifying}.}.
Each operator $Q_n$ can be written as a sum of local operators spanning at most $n$ consecutive sites. The conserved quantities can be divided into two groups by their behavior under spatial reflection. The even ones, $Q_{2n}$, are symmetric under the parity operation, whereas the odd ones $Q_{2n+1} $ are anti-symmetric.

An integrable boundary state, say $\ket{\Psi_0}$, is defined as a state that is annihilated by all odd conserved charges \cite{Piroli2017WhatIA}, i.e., 
\begin{equation}
    Q_{2k+1}\ket{\Psi_0} = 0
\end{equation}
for all $k = 1, 2,\ldots$. 
One can see that if $\ket{\Psi_0}$ is an energy eigenstate of another Hamiltonian $H_0$, 
then $\ket{\Psi_0}$ is an exact eigenstate of the Hamiltonian 
\begin{align}
H(t_1, t_2, \ldots, t_n) = H_0 +\sum_{k=1}^n t_k Q_{2k+1}, 
\end{align}
where $\{t_k\}^n_{k=1}$ are real numbers. If this new Hamiltonian is non-integrable and the energy of $\ket{\Psi_0}$ is in the middle of the spectrum, then $\ket{\Psi_0}$ is likely to be a scar state. Note that this method allows one to construct an enormous number of scarred models by changing parameters $\{t_k\}^n_{k=1}$. 

Next, we describe the second approach (ii). This is a variant of the existing methods~\cite{SYK, moudgalya2020eta, buvca2019non, ren2021quasisymmetry}. 
We first assume that the tower of states generated by an operator $\mathcal{Q}$, namely, $\mathcal{Q}$ $\ket{\psi}, \mathcal{Q}\ket{\psi}, \mathcal{Q}^2\ket{\psi}, \cdots, \mathcal{Q}^n \ket{\psi}$, are energy eigenstates of some Hamiltonian $H$. Then, we can create the system with QMBS by considering the property of operator $\mathcal{Q}$ other than each eigenstate. To be more specific, if there exists an operator $\Pi$ such that $\Pi\mathcal{Q}^k \ket{\psi} = \lambda_k \mathcal{Q}^k \ket{\psi}$ ($\lambda_k \in \mathbb{R}$) for all $k=0,1,2,...,n$, each $\mathcal{Q}^k \ket{\psi}$ is also an eigenstate of the new Hamiltonian $H' = H + \Pi$. However, almost all eigenstates of $H$ are no longer eigenstates of $H'$. Thus, it is highly likely that $\mathcal{Q}^k \ket{\psi}$ are the only nonthermal eigenstates of the Hamiltonian $H'$. 

\subsection{Numerical verification}\label{subsec:numerical_verification}
To discuss whether a given model has QMBS or not, we need to answer at least the following two questions: 
\begin{itemize}
\item Is the model non-integrable? 
\item Are the likely scar states nonthermal? 
\end{itemize}

We can answer the first question by checking the level-spacing distribution. It is defined as follows. Let $E_1\leq E_2\leq \cdots \leq E_N $ be the eigenenergies of a Hamiltonian in ascending order. The normalized level spacing $s_i$ is then defined as $s_i \coloneqq (E_{i+1} - E_i) / \delta$, where $\delta\coloneqq (E_N-E_1)/(N - 1)$ denotes the average over all neighboring level spacings. Then, the level-spacing distribution function $P(s)$ is defined such that $P(s)\Delta s$ is the probability of finding $s_i$ in the interval $[s, s+\Delta s]$.
It is empirically known that the level-spacing distribution follows the Poisson distribution
\begin{equation}
    P(s)_\mathrm{Poisson}=e^{-s}
\end{equation}
for integrable systems \cite{Berry_1977}, 
whereas for non-integrable models, it follows the Gaussian orthogonal ensemble (GOE)
\begin{equation}
    P(s)_\mathrm{GOE} = \frac{\pi}{2}se^{-\frac{\pi}{4}s^2}
\end{equation}
for a Hamiltonian with time-reversal symmetry, or it follows the Gaussian unitary ensemble (GUE) 
\begin{equation}
    P(s)_\mathrm{GUE} = \frac{32}{\pi^2}s^2e^{-\frac{4}{\pi}s^2}
\end{equation}
for a Hamiltonian without time-reversal symmetry~\cite{Berry_1981, Bohigas_1984, Szasz-Schagrin, GAUDIN1961447}.
To check which distribution the level spacing follows, we can use $r$-value other than the histogram of the level spacings $s_i$~\cite{oganesyan2007localization}. It is defined as follows: let $r_i = \min\left(s_i/s_{i+1}, s_{i+1}/{s_i} \right)$ be a ratio of neighboring level spacings, and let $\langle r \rangle$ be the $r$-value, the average of $r_i$. By calculating the $r$-value for each of the distributions, we get $\langle r_{\mathrm{Poisson}}\rangle = 2\ln 2 -1 \approx 0.386$ for Poisson, $\langle r_{\mathrm{GOE}} \rangle = 4-2\sqrt{3} \approx 0.536$ for GOE, and $\langle r_\mathrm{GUE} \rangle = \frac{2\sqrt{3}}{\pi}-\frac{1}{2} \approx 0.603$ for GUE~\cite{atas2013distribution}.

To answer the second question, we examine several physical quantities for all energy eigenstates. 
In particular, we use the entanglement entropy as a diagnostic tool to identify nonthermal states. It is defined as follows. 
Let $\ket{\psi}$ be a state of the system and let $A$ be a subsystem. 
Then the reduced density matrix of $A$ is defined as $\rho_A = \Tr_{B} (\dyad{\psi}{\psi})/\!\braket{\psi}{\psi}$, where $B$ is the complement of $A$. 
The entanglement entropy of $\ket{\psi}$ is then defined as
\begin{equation}
    S_A(\ket{\psi}) = -\Tr_{A}[\rho_A\ln\rho_A].
\end{equation}
It is known that the entanglement entropy of a thermal state obeys a volume law,
i.e., $S_A$ is proportional to the system size~\cite{mori2018thermalization}. 
On the other hand, nonthermal states have sub-volume-law entanglement entropy even if they are in the middle of the energy spectrum. 
Therefore, nonthermal states such as QMBS can be identified as low-entanglement outliers in the plot of energy versus $S_A$. In this paper, we calculate the half-system entanglement entropy for one- and two-dimensional systems (see Fig. \ref{fig_1D_sublattice} and \ref{fig:pic_of_triangle}). 

\begin{figure}[tbp]
    \centering
    \includegraphics[width=\linewidth]{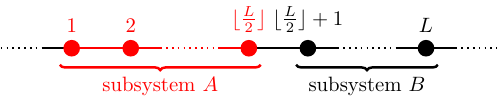}    
    \caption{The definition of subsystems $A$ and $B$ for one-dimensional systems. Note that the number of sites in 
    $A$ is one less than that in 
    $B$ when $L$ is odd.}
    \label{fig_1D_sublattice}
\end{figure}

\section{Spin-\texorpdfstring{$1/2$}{1/2} Majumdar-Ghosh model + scalar spin chirarity}\label{sec_MG+CSC}

This is one example of a scarred model constructed by method (i) in Sec. \ref{subsec:method_of_construction}.

\subsection{Hamiltonian}
In this section, we consider a one-dimensional spin-$1/2$ model with two- and three-body interactions.
The Hamiltonian of the model depends on a parameter $t \in \mathbb{R}$ and is given by
\begin{equation}\label{MG+SC}
H(t) = H_{\mathrm{MG}} + t C_{\mathrm{SC}},
\end{equation}
where 
\begin{align}
H_{\mathrm{MG}} &= \sum_{j=1}^L \left[(\bm{S}_j+\bm{S}_{j+1}+\bm{S}_{j+2})^2-\frac{3}{4}\right], \\
C_{\mathrm{SC}} &= \sum_{j=1}^L \bm{S}_j\cdot(\bm{S}_{j+1}\times\bm{S}_{j+2}) \label{CSC}, 
\end{align}
and ${\bm S}_j =(S^x_j, S^y_j, S^z_j)$ is the spin-1/2 operator acting on 
site $j$:
\begin{equation}
    S^x_j = \frac{1}{2}\mqty(0 & 1 \\ 1 & 0)_j, \quad S^y_j = \frac{1}{2}\mqty(0 & -\mathrm{i}\\ \mathrm{i} & 0)_j,\quad S^z_j = \frac{1}{2}\mqty(1 & 0 \\ 0 & -1)_j
\end{equation}
We 
impose periodic boundary conditions and assume that the number of sites $L$ is even.
The first term $H_\mathrm{MG}$ is the Hamiltonian of the Majumdar-Ghosh model exhibiting exact dimer ground states~\cite{Majumdar, Majumdar2, Caspers}, while the second term $C_\mathrm{SC}$ is the scalar spin chirality~\cite{Wen}. Note that $C_\mathrm{SC}$ is the third conserved charge of the spin-1/2 Heisenberg model~\cite{Frahm}. 
Physically, this term appears at third order in perturbation theory starting from the ${\rm SU}(2)$ Hubbard model at half-filling in an external magnetic field~\cite{SU2scalar}. 
{We also note in passing that a similar three-spin interaction has recently been realized experimentally in Rydberg atom arrays~\cite{kim2023realization}.}

\subsection{Symmetries and non-integrability}
The Majumdar-Ghosh model has several symmetries: the Hamiltonian is invariant under time-reversal $\Theta$: ${\bm S}_j \mapsto -{\bm S}_j$, SU(2) spin rotation, translation ${\cal T}$: ${\bm S}_j \mapsto {\bm S}_{j+1}$, bond-centered inversion ${\cal I}_\mathrm{b}$: ${\bm S}_j \mapsto {\bm S}_{L-j+1}$, site-centered inversion ${\cal I}_\mathrm{s}$: ${\bm S}_j \mapsto {\bm S}_{-j}$, and spin-flip 
${\cal F}$: $\bm{S}_j \mapsto (S^x_j, -S^y_j, -S^z_j)$~\footnote{The group $\{ 1, {\cal F}\}$ is a discrete subgroup of ${\rm SU}(2)$, where ${\cal F}$ corresponds to a $\pi$ rotation around the $x$ axis.}. 
Among these symmetries, time-reversal and bond-centered- and site-centered-inversion symmetries are absent in the scalar spin chirality $C_\mathrm{SC}$. However, the combination of $\Theta$ and ${\cal I}_\mathrm{s}$ leaves $C_\mathrm{SC}$ invariant, which we call pseudo-time-reversal symmetry. 
Therefore, the entire model $H(t)$ has ${\rm SU}(2)$, translation, spin-flip, and pseudo-time-reversal symmetries. Among them, the first three are unitary symmetries and allow us to diagonalize the Hamiltonian sector by sector. For convenience we define the total spin operators as $\totS^\alpha:=\sum^L_{j=1}S^\alpha_j$ ($\alpha=x,y,z$) and write the eigenvalue of $\mathrm{SU}(2)$ Casimir operator $\totSb^2=\sum_{\alpha=x,y,z}(\totS^\alpha)^2$ as $\totS (\totS + 1)$. 
With a slight abuse of notation, we will denote the eigenvalues of the operators $\totS^z$, ${\cal T}$, and ${\cal F}$ by the same symbols. 

The Hamiltonian Eq. (\ref{MG+SC}) is non-integrable. 
This can be shown by studying the level-spacing statistics in a symmetry sector labeled by $\totS^z$, ${\cal T}$, and ${\cal F}$. 
Figure~\ref{fig:lsp-MG+SC} clearly shows that the level-spacing distribution of $H(t)$ is close to the GOE Wigner-Dyson distribution. This is consistent with the pseudo-time-reversal symmetry of the model. 
In addition, the $r$-value calculated from the histogram in Fig.~\ref{fig:lsp-MG+SC} is $\langle r \rangle \simeq 0.538$, which is close to $\langle r_{\rm GOE} \rangle$. 
Thus, we conclude that the model (\ref{MG+SC}) is non-integrable. 

{
Figure {\ref{fig:r_vs_t}} shows $\langle r \rangle$ as a function of $t$ for the model ({\ref{MG+SC}}) with different system sizes. Clearly, the results for $L=18$ and $20$ have the same trend. When $t<20$, the $r$-value $\langle r \rangle$ is close to the GOE value $0.536$, whereas $\langle r \rangle$ gets closer to the Poisson value $0.386$ as $t$ increases further. This implies that for large $t$ the whole Hamiltonian ({\ref{MG+SC}}) is dominated by the integrable part $t C_{\rm SC}$ and the system behaves more like an integrable system.} 

\begin{figure}[tbp]
\centering
\includegraphics[width=\linewidth]{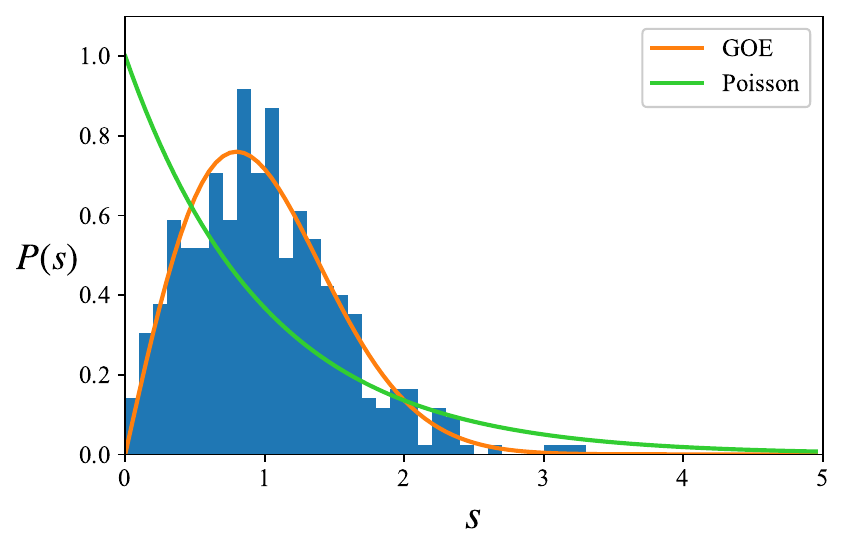}
\caption{Level-spacing statistics in the middle half of the spectrum of $H(t)$ in Eq. (\ref{MG+SC}) with $t = 8$ and $L=20$. The data are taken in the symmetry sector where $(\totS^z, \totS, {\cal T}, {\cal F})=(0, 0, 1, 1)$. 
The curves $P(s)_\mathrm{GOE}$ (orange) and $P(s)_\mathrm{Poisson}$ (green) are shown for comparison. 
The distribution follows $P(s)_\mathrm{GOE}$. }
\label{fig:lsp-MG+SC}
\end{figure}

\begin{figure}[tbp]
    \centering
    \includegraphics[width=\linewidth]{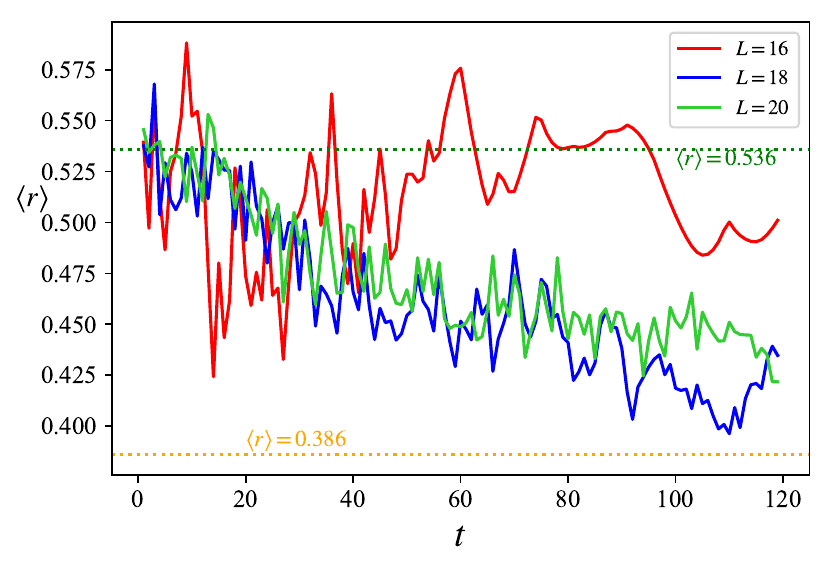}
    \caption{{The mean level-spacing ratio $\langle r \rangle$ as a function of $t$ for the Hamiltonian ({\ref{MG+SC}}) in the symmetry sector $(\totS^z, \totS, {\cal T}, {\cal F})=(0, 0, 1, 1)$ for $L=16$, $18$, and 20. 
    The green and orange dotted lines indicate $\langle r_{\mathrm{GOE}} \rangle \approx 0.536$ and $\langle r_{\mathrm{Poisson}} \rangle \approx 0.386$, respectively.}}
    \label{fig:r_vs_t}
\end{figure}

\subsection{Scar states}
The zero-energy ground states of $H_\mathrm{MG}$ can be written as the following dimer states
\begin{align}
\ket{\Psi_1} & =\ket{\mathrm{sing}}_{1, 2}\otimes\ket{\mathrm{sing}}_{3, 4}\otimes\cdots\otimes\ket{\mathrm{sing}}_{L-1, L}, \\
\ket{\Psi_2} & = \ket{\mathrm{sing}}_{2, 3}\otimes\ket{\mathrm{sing}}_{4, 5}\otimes\cdots\otimes\ket{\mathrm{sing}}_{L, 1},
\end{align}
where $\ket{\mathrm{sing}}_{i,j} = \frac{1}{\sqrt{2}}(\ket{\uparrow\downarrow}_{i,j}-\ket{\downarrow\uparrow}_{i,j})$ denotes the normalized spin singlet formed between site $i$ and $j$. The two states are related to each other by $\ket{\Psi_2} = {\cal T}\ket{\Psi_1}$.

As discussed in \cite{Piroli2017WhatIA}, these states are integrable boundary states of the spin-$1/2$ Heisenberg XXX chain, meaning that they are annihilated by all parity-odd conserved charges of the Heisenberg Hamiltonian. Since $C_\mathrm{SC}$ is one of the parity-odd conserved charges, it is clear that $\ket{\Psi_1}$ and $\ket{\Psi_2}$ are simultaneously annihilated by both $H_\mathrm{MG}$ and $C_\mathrm{SC}$. Thus they are zero-energy eigenstates of $H(t)$ in Eq. (\ref{MG+SC}) for all $t$. 
We now argue that the states $\ket{\Psi_1}$ and $\ket{\Psi_2}$ can be thought of as QMBS. To this end, we compute the half-chain entanglement entropies ($S_A$) of all energy eigenstates {for several system sizes}.
\begin{figure*}[tbp]
\captionsetup[subfloat]{labelformat = empty}
\subfloat[][$L=14$]{
    \centering
    \includegraphics[keepaspectratio, width=0.3\linewidth]{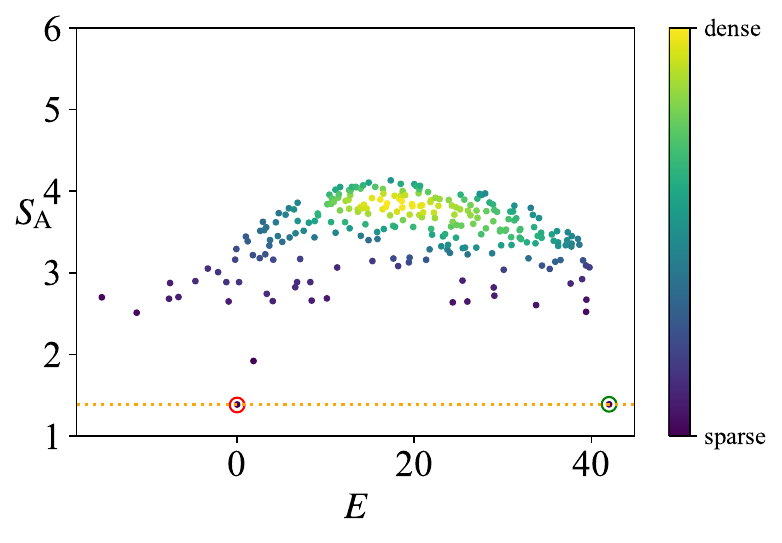}
}\quad
\subfloat[][$L=16$]{
    \centering
    \includegraphics[keepaspectratio, width=0.3\linewidth]{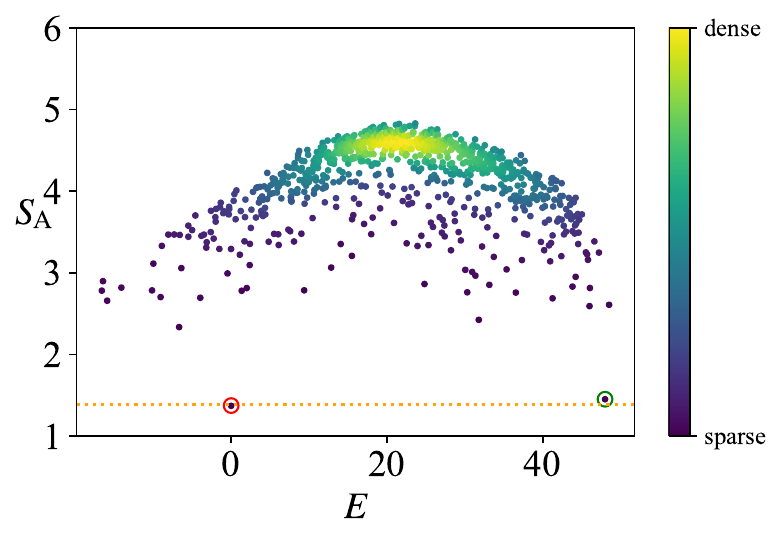}
}\quad
\subfloat[][$L=18$]{
    \centering
    \includegraphics[keepaspectratio, width=0.3\linewidth]{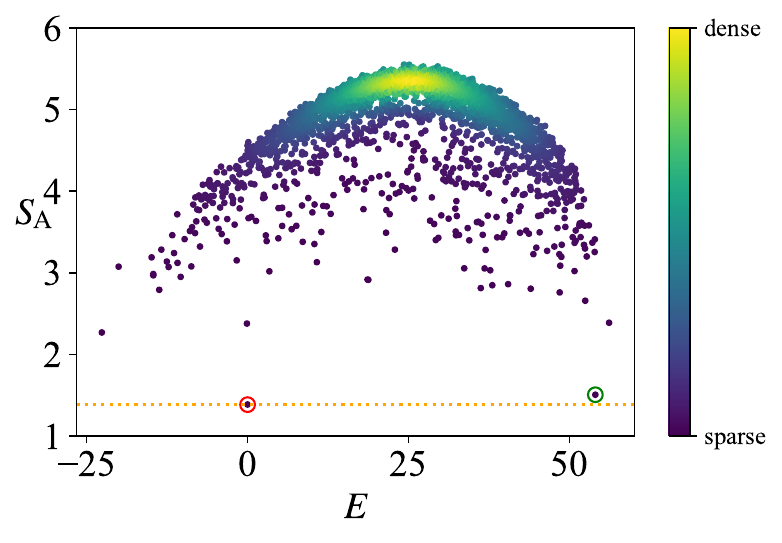}
}
\caption{{Entanglement entropy $S_A$ in all eigenstates of $H(t)$ in Eq. (\ref{MG+SC}) with $t=8$ in the symmetry sector $(\totS^z, {\cal T})=(0,1)$ for $L=14$, $16$, and $18$. The density of data points is color coded.}  
The red and green circles indicate the dimer state $\ket{\mathrm{dimer}}$ and the ferromagnetic state $\ket{F_{L/2}}$, respectively. 
The orange dotted line indicates
$S_A = 2\ln 2 \simeq 1.386$.
}
\label{fig:MG_ent}
\end{figure*}

Figure \ref{fig:MG_ent} shows the results in the subspace spanned by translation-invariant states with zero magnetization, i.e., $(\totS^z, {\cal T})=(0,1)$. Clearly, there is a low-entanglement state distinguished from the other at zero energy. 
This state can be identified as $ \ket{\mathrm{dimer}} = (2+(-\frac{1}{2})^{\frac{L}{2}-2})^{-\frac{1}{2}}(\ket{\Psi_1}+\ket{\Psi_2})$, which is invariant under translation by one site. 
It is known that the half-chain entanglement entropy of $\ket{\mathrm{dimer}}$ becomes $S_A = 2\ln 2$ for sufficiently large $L$~\cite{Ramkarthik}. We can see that the entanglement entropy of the dimer state matches this value. 
{Furthermore, this state remains an outlier from the rest of the states with increasing $L$.}

We note that the low-entanglement state near the upper edge of the spectrum ($E=54$) is a ferromagnetic state with zero magnetization written as $\ket{F_{L/2}} = (\totS^-)^{L/2}\ket{\Uparrow}$, where $\totS^{-} := \totS^x - \mathrm{i} \totS^y$ and $\ket{\Uparrow}$ denotes the all-up state. 
The asymptotic form of the half-chain entanglement entropy of this state can be read off from Eq. (16) of \cite{popkov2005logarithmic} (see also Appendix \ref{appendix:ferro_ent_half}). The result reads
\begin{equation}\label{eq:EE_ShalfFM}
    S_A(\ket{F_{L/2}}) \approx \frac{1}{2}\ln{L} +\frac{1}{2} \ln\frac{e\pi}{8} \quad(L \gg 1).
\end{equation}
\begin{figure}
    \centering
    \includegraphics[width=0.8\linewidth]{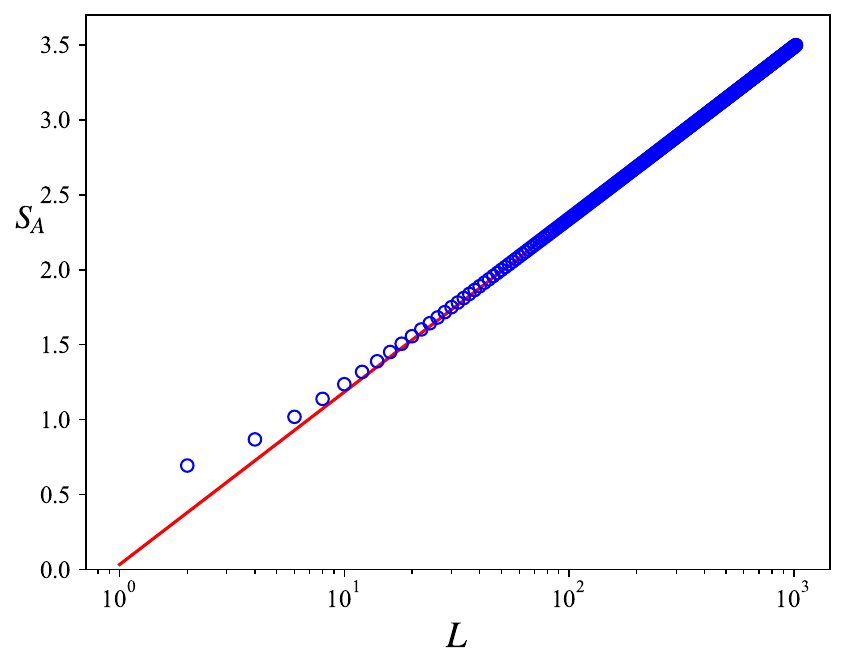}
    \caption{{Size dependence of the half-chain entanglement entropy of the ferromagnetic state $\ket{F_{L/2}}$. The red line represents the right-hand side of Eq. ({\ref{eq:EE_ShalfFM}}).}}
    \label{fig:SA_size}
\end{figure}
\noindent
{Figure {\ref{fig:SA_size}} shows the size dependence of $S_A(\ket{F_{L/2}})$. It clearly demonstrates that the entanglement entropy obeys a sub-volume law $S_A \sim \ln(L)$.}
For $L=18$, we obtain $S_A (\ket{F_{L/2}}) \approx 1.478$, which agrees with the numerical result shown by the green circle in Fig. \ref{fig:MG_ent}. It should be noted that the ferromagnetic state is not an example of a scar state because the state belongs to the subspace with maximum total spin, which is an irreducible representation of the global ${\rm SU}(2)$ symmetry of the model Eq. (\ref{MG+SC}).

We remark that since the dimer states are integrable boundary states of the spin-$1/2$ Heisenberg chain, one can construct other models involving higher-order conserved charges $Q_{2k+1}$ ($k > 1$) which have the dimer states as QMBS. 
See Refs.~\cite{grabowski1994quantum, GRABOWSKI1995299} for the explicit expressions of $Q_{2n+1}$.  

Another characteristic of QMBS is that the expectation values of physical quantities in these states do not match the microcanonical averages. Thus, we can identify QMBS by comparing the expectation value of an observable for each energy eigenstate. 
{Here, we choose the staggered Heisenberg Hamiltonian
\begin{equation}
    H_\mathrm{st} = \sum_{j=1}^L(1+(-1)^j\epsilon)\bm{S}_j\cdot\bm{S}_{j+1}
\end{equation}
as a 
generic observable and calculate the expectation value $\expval{H_\mathrm{st}}= \ev{H_\mathrm{st}}{\psi}$ for each normalized eigenstate $\ket{\psi}$ of $H(t)$ in Eq. (\ref{MG+SC}). Figure \ref{fig:Majumder_Ghosh_XXX} shows the numerical results for $t=8$ and $L=18$ in the symmetry sector $({\cal S}^z, {\cal T}) = (0, 1)$.}
{
They clearly indicate that the expectation value in the dimer state, which is calculated as 
\begin{equation}\label{eq:dimer_expXXX}
    \expval{H_\mathrm{st}}{\mathrm{dimer}} = - \frac{3L \left[
    (-\frac{1}{2})^{\frac{L}{2}}+\frac{1}{4} 
    \right]}{2+(-\frac{1}{2})^{\frac{L}{2}-2}} ,
\end{equation}
is far from those in the states near $E=0$.}

Although we have shown the results only for the symmetry sector $({\cal S}^z, {\cal T}) = (0, 1)$, we note that similar results hold for $({\cal S}^z, {\cal T}) = (0, -1)$, where $|\mathrm{dimer}^\prime \rangle \propto | \Psi_1 \rangle - | \Psi_2 \rangle$ is singled out as a scar state. 

\begin{figure}[tbp]
    \centering
    \includegraphics[width=\linewidth]{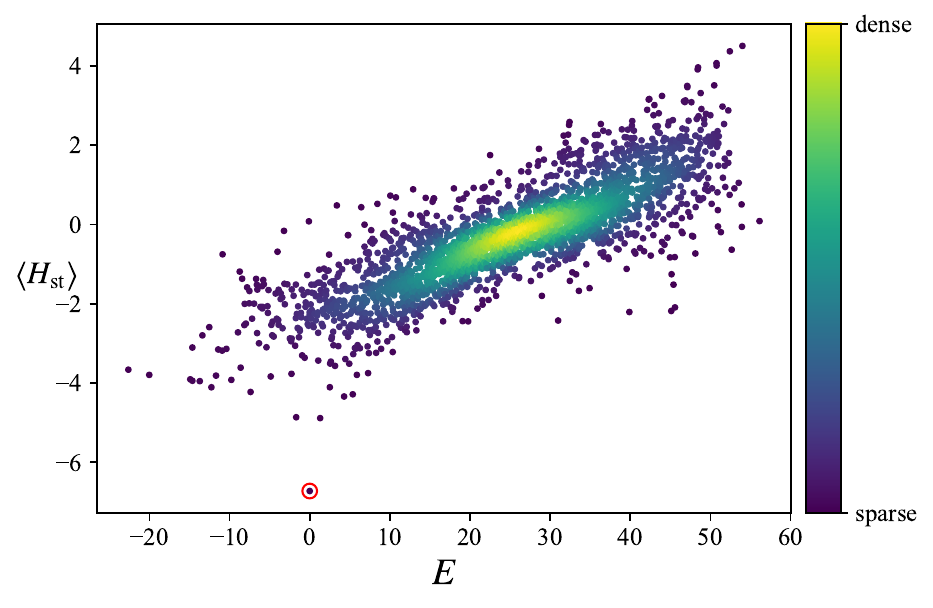}
    \caption{{The expectation values of $H_\mathrm{st}$ with $\epsilon=0.2$ in all eigenstates of $H(t)$ in Eq. (\ref{MG+SC}) with $t=8$, and $L = 18$ in the symmetry sector $(\totS^z, {\cal T})=(0,1)$. The density of data points is color coded. The red circle indicates the dimer state $\ket{\mathrm{dimer}}$ with $\expval{H_\mathrm{st}} \simeq -6.72$, which agrees with the analytical value $\expval{H_\mathrm{st}}=-1143/170$ obtained from Eq. (\ref{eq:dimer_expXXX}).}}
    \label{fig:Majumder_Ghosh_XXX}
\end{figure}

\section{Spin-1 AKLT model + \texorpdfstring{$H_3$}{Lg}} \label{sec_AKLT+H3}

This is another example of a model with QMBS constructed by method (i) in Sec. \ref{subsec:method_of_construction}.

\subsection{Hamiltonian}
In this section, we consider a spin-$1$ model with two- and three-site interactions. Consider a spin-$1$ chain of length $L$ with periodic boundary conditions, and let ${\bm S}_j = (S^x_j, S^y_j, S^z_j)$ be the 
operators of the spin-$1$ representation of the ${\rm SU}(2)$ algebra acting on site $j$:
\begin{align}
\label{spin1 operator}
    S^x_j&=\frac{1}{\sqrt{2}}\left(\begin{array}{ccc}
        0 & 1 & 0 \\
        1 & 0 & 1 \\
        0 & 1 & 0 \\
    \end{array}\right)_j,\,
    S^y_j = \frac{1}{\sqrt{2}}\left(\begin{array}{ccc}
        0 & -\mathrm{i} & 0 \\
        \mathrm{i} & 0 & -\mathrm{i} \\
        0 & \mathrm{i} & 0 \\
    \end{array}\right)_j,\,\nonumber\\
    S^z_j &= \left(\begin{array}{ccc}
        1 & 0 & 0 \\
        0 & 0 & 0 \\
        0 & 0 & -1 \\
    \end{array}\right)_j. 
\end{align}
As usual, we define the total spin operators by $\totS^\alpha:=\sum^L_{j=1}S^\alpha_j$ ($\alpha=x,y,z$) and write the eigenvalue of the Casimir operator $\totSb^2=\sum_{\alpha=x,y,z}(\totS^\alpha)^2$ as $\totS (\totS+1)$. 

\medskip

The Hamiltonian of the model is given by
\begin{equation}\label{AKLT+H3}
H(t) = H_{\mathrm{AKLT}}+t H_3,
\end{equation}
where
\begin{equation}\label{HAKLT}
H_\mathrm{AKLT} = \sum_{j=1}^L\left[\bm{S}_j\cdot\bm{S}_{j+1}+\frac{1}{3}(\bm{S}_j\cdot\bm{S}_{j+1})^2+\frac{2}{3}\right]
\end{equation}
is the Affleck-Kenedy-Lieb-Tasaki (AKLT)  Hamiltonian~\cite{Affleck, Affleck2, book_Tasaki} and 
\begin{equation}\label{H3}
H_3 = \sum_{j=1}^L\sum^8_{a,b,c=1}f_{abc}\lambda_j^a\lambda_{j+1}^b\lambda_{j+2}^c.
\end{equation}
is the third conserved quantity of the ${\rm SU}(3)$ Sutherland model~\cite{Sutherland, Lai_1974, Uimin_1970, Grabowski}. 
Here $\lambda^a_j$ ($a=1,...,8$) represent the Gell-Mann matrices acting on site $j$, and $f_{abc}$ are the structure constants determined by $[\lambda^a, \lambda^b] = 2\mathrm{i}f_{abc}\lambda^c$. 
The term $H_3$ is an ${\rm SU}(3)$ generalization of the scalar spin chirality. 
This can be seen by noting that the scalar spin chirality discussed in Sec.~\ref{sec_MG+CSC} can be rewritten as
\begin{equation}
    C_\mathrm{SC} = \sum^L_{j=1}\sum_{\substack{a,b,c \\ =x,y,z}} \epsilon_{abc}S^a_jS^b_{j+1}S^c_{j+2},
\end{equation}
where $\epsilon_{abc}$ is the totally anti-symmetric tensor, which is also known as the structure constants of the ${\rm SU}(2)$ algebra. 
 
There is an interesting alternative expression for $H_3$. Let $P_{i,j}$ be the permutation operator that swaps the state at site $i$ with the state at site $j$: 
\begin{equation}\label{perm}
P_{i, j}\ket{\ldots, s_i, \ldots, s_j, \ldots} = \ket{\ldots, s_j, \ldots, s_i, \ldots}. 
\end{equation}
Here, $s_i \in \{+, 0, -\}$ denotes the spin state at the site $i$. Using $P_{i,j}$,  we define the three-site ring-exchange operator as $P_{i,j,k} \coloneqq P_{j,k}P_{i,j}$. Then 
the following relation 
holds~\cite{oh2017proposal}:
\begin{equation}
    \bm{\lambda}_i\cdot\bm{\lambda}_j = 2P_{i,j}-\frac{2}{3},
\end{equation}
where $\bm{\lambda}_i = (\lambda_i^1, \lambda_i^2, \cdots, \lambda_i^8)$ is a collection of the eight Gell-Mann operators acting on site $i$. Then, using this relation and $\sum_c f_{abc}\lambda^c_{j+2} = \frac{1}{2 \mathrm{i}}[\lambda^a_{j+2}, \lambda^b_{j+2}]$, we arrive at the alternative expression for $H_3$ in terms of $P_{i,j,k}$:
\begin{equation}\label{perm_rep}
    H_3 = -2\mathrm{i}\sum_{j=1}^L(P_{j, j+1, j+2}-P^\dagger_{j, j+1, j+2}),
\end{equation}
where $P^\dagger_{i, j, k} = P^{-1}_{i, j, k}=P_{i,j}P_{j,k}$. 
We note in passing that there are some studies on spin models containing the ${\rm SU}(3)$ scalar spin chirality term 
~\cite{Pijk}. 

\subsection{Symmetries and non-integrability}
The AKLT Hamiltonian is invariant under time-reversal $\Theta$, ${\rm SU}(2)$ spin rotation, translation ${\cal T}$, bond-centered inversion ${\cal I}_\mathrm{b}$, site-centered inversion ${\cal I}_\mathrm{s}$, and spin-flip ${\cal F}$, whereas $H_3$ lacks time-reversal and inversion symmetries among them (see Appendix \ref{appendix:ferromagnetic} for a discussion of the ${\rm SU}(2)$ symmetry of $H_3$). 
However, the combined symmetry $\Theta {\cal I}_\mathrm{s}$ leaves $H_3$ invariant. Therefore, the model $H(t)$ in Eq. (\ref{AKLT+H3}) has ${\rm SU}(2)$, translation, spin-flip, and pseudo-time-reversal symmetries. 

Since the ${\rm SU}(3)$ Sutherland model is integrable, the third conserved charge $H_3$ can be considered as a quantum integrable Hamiltonian as well. On the other hand, the AKLT Hamiltonian is non-integrable. Thus, the model Eq. (\ref{AKLT+H3}), an interpolation between the two, is expected to be non-integrable. To verify this, we compute the level-spacing statistics (Fig. \ref{lsp-AKLT+H3}). The results show that the level-spacing distribution follows the GUE Wigner-Dyson distribution instead of the Poisson distribution, which provides strong evidence that this model is non-integrable. It can also be checked by the $r$-value $\langle r \rangle \simeq 0.599$, which is close to $\langle r_\mathrm{GUE}\rangle \simeq 0.603$. We remark that the model is expected to belong to the GOE class, as it has pseudo-time-reversal symmetry. The observed discrepancy may be due to the crossover between different universality classes~\cite{schierenberg2012wigner, kundu2023signatures} or finite-size effects, which are also pronounced in the PXP model~\cite{Turner, Turner2}. 

\begin{figure}[htpb]
\centering
\includegraphics[width=\linewidth]{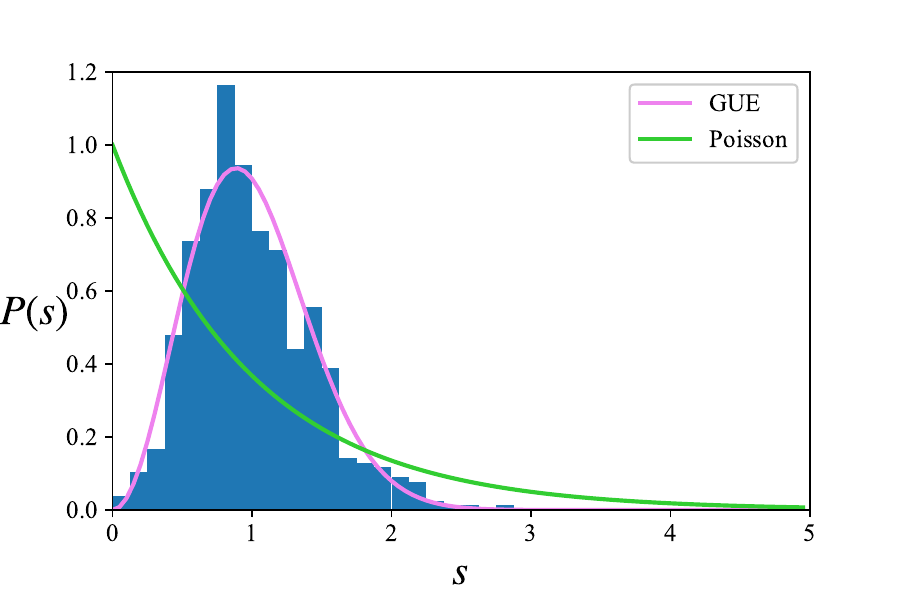}
\caption{Level spacing statistics in the middle half of the spectrum of $H(t)$ in (\ref{AKLT+H3}) with $t = 3$ and $L = 14$. The data are taken in the symmetry sector where $(\totS^z, \totS, \mathcal{T}, \mathcal{F}) = (0, 0, 1, 1)$. 
The curves $P(s)_\mathrm{GUE}$ (magenta) and $P(s)_\mathrm{Poisson}$ (green) are shown for comparison. 
The distribution follows $P(s)_\mathrm{GUE}$. 
}
\label{lsp-AKLT+H3}
\end{figure}

\subsection{Scar state}\label{scarsss}
We now argue that the ground state of the AKLT model, known as the valence-bond solid (VBS) state, can be thought of as a scar state. The VBS state can be written as a matrix product state \cite{Affleck2}:
\begin{equation}
\ket{\Psi_{\mathrm{VBS}}} = \sum_{\{s\}}\Tr[A^{s_1}A^{s_2}\cdots A^{s_L}]\ket{s_1, s_2, \ldots, s_L},
\label{eq:spin-1 VBS}
\end{equation}
where 
$s_j \in \{ +, 0, -\}$ denotes the spin state at site $j$ and
\begin{equation}
A^+ = \sqrt{\frac{2}{3}}\sigma^+, A^0 = -\sqrt{\frac{1}{3}}\sigma^z, A^- = -\sqrt{\frac{2}{3}}\sigma^-,
\end{equation}
with $\sigma^\pm$ and $\sigma^z$ being the Pauli matrices. The summation is taken over all possible spin configurations.

The VBS state is an exact ground state of $H_\mathrm{AKLT}$ with zero energy, i.e., $H_\mathrm{AKLT}\ket{\Psi_\mathrm{VBS}} = 0$. In addition, the VBS state is an integrable boundary state of the Sutherland model (see Appendix \ref{H3VBS=0} for details). Thus, $\ket{\Psi_{\mathrm{VBS}}}$ is a zero-energy eigenstate of $H(t)$ in Eq. (\ref{AKLT+H3}) for all $t$ and is likely to be a scar state.
In order to establish this, we need to consider the case of moderate $t$. 
This is because, if $t$ is close to zero, then the energy of the VBS state is near the lower edge of the spectrum. 
However, such a state cannot be thought of as QMBS, as its energy is not in the bulk of the spectrum. 
In addition, $t$ should not be too large so that the model is away from the integrable case ($H_3$). 
With these in mind, we study the model with $t=3$.

To confirm that the VBS state is indeed a scar state, we numerically compute half-chain entanglement entropies $S_A$. The results are shown in Fig. \ref{EE_HAKLT+H3}. 
It is known that 
$S_A$ of the VBS state is~\cite{hirano2007entanglement, katsura2007exact}
\begin{align}
    S_A(\ket{\Psi_\mathrm{VBS}}) = -3\lambda_A\ln\lambda_A-\lambda_B\ln\lambda_B 
\end{align}    
with
\begin{align}
    \lambda_A &= \frac{1}{4}\frac{(1-p^{\lfloor L/2 \rfloor})(1-p^{\lceil L/2 \rceil})}{1-p^{L-1}}, \\
    \lambda_B &= \frac{1}{4}\frac{(1+3p^{\lfloor L/2 \rfloor})(1+3p^{\lceil L/2 \rceil})}{1-p^{L-1}},
\end{align}
where $p=-1/3$. $\lfloor x \rfloor$ and $\lceil x \rceil$ denote the floor and ceiling functions, respectively. In the thermodynamic limit, we obtain
\begin{equation}
    \lim_{L\to\infty}S_A(\ket{\Psi_\mathrm{VBS}}) = 2\ln 2.
\end{equation}
Figure \ref{EE_HAKLT+H3} clearly shows that the VBS state at $E=0$ with $S_A \sim 2\ln 2$ is isolated from the other states, indicating that the VBS state exhibits a different behavior from the other thermal states.

The other low-entanglement state at $E=18$ is a ferromagnetic state $\ket{F_L}$ (see Appendix \ref{energy_of_ferromagnetic}). This state is not an example of a scar state because it is a state in the subspace with maximum total spin, which is an irreducible representation of the ${\rm SU}(2)$ symmetry of the model Eq. (\ref{AKLT+H3}). 

\begin{figure}[tbp]
\centering
\includegraphics[width=\linewidth]{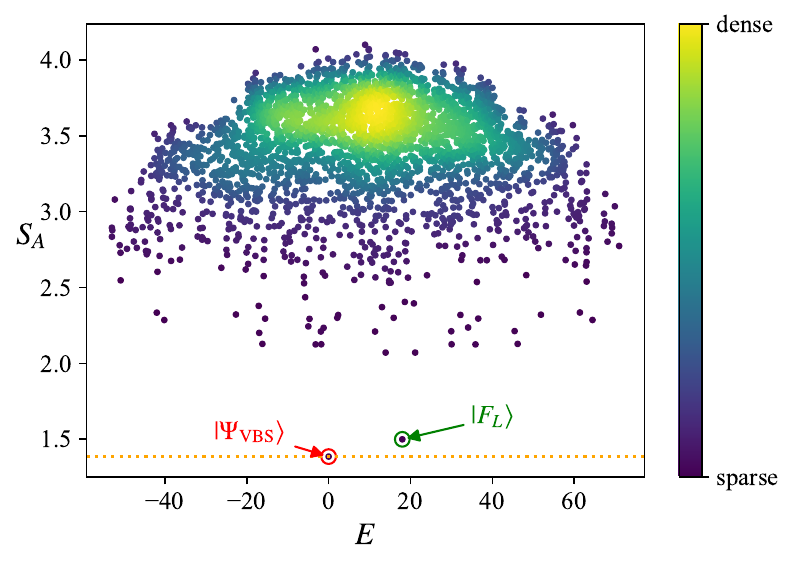}
\caption{Entanglement entropies in all eigenstates of $H(t)$ in Eq. (\ref{AKLT+H3}) for $L = 9$, $t = 3$ in the $\totS^z = 0$ sector. {The density of data points is color coded.} The points enclosed by the red and green circles indicate the VBS state $\ket{\Psi_\mathrm{VBS}}$ and ferromagnetic state $\ket{F_L}$, respectively. The orange dotted line indicates $S_A = 2 \ln 2 \simeq 1.386$.}
\label{EE_HAKLT+H3}
\end{figure}

\subsection{Inhomogeneous generalization}
In the previous model, it was necessary to increase $t$ in order to make the energy density of the VBS state (relative to the ground state) finite. 
However, the problem is that this would increase the effect of $H_3$ and make the behavior of the system more like that of an integrable system. To avoid such a situation, we consider an inhomogeneous generalization of the AKLT Hamiltonian. In this case, the VBS state is still a zero-energy eigenstate of the inhomogeneous Hamiltonian, yet locating in the middle of the spectrum. 
The Hamiltonian of the inhomogeneous model is given by
\begin{align}
 \tilde{H}(t) = \tilde{H}_{\mathrm{AKLT}} + tH_3
\label{inhomogeneous AKLT+H3}
\end{align}
with
\begin{align}
 \tilde{H}_{\mathrm{AKLT}} = \sum_{j=1}^Lc_j\left[\bm{S}_j\cdot\bm{S}_{j+1}+\frac{1}{3}(\bm{S}_j\cdot\bm{S}_{j+1})^2+\frac{2}{3}\right]. 
 \label{inhomogeneous AKLT}
\end{align}
In principle, each coefficient $c_j$ can be any real number. However, for our purpose, we choose $|c_j|\lesssim t$ in order to keep the magnitudes of the two terms ($\tilde{H}_{\mathrm{AKLT}}$ and $H_3$) comparable. 
In the following, we set $t \simeq 1$ and draw $c_j$ uniformly from the interval $[-1,1]$, in which case the model is no longer invariant under the combination of $\Theta$ and ${\cal I}_\mathrm{s}$. 

Like $H(t)$ in Eq. (\ref{AKLT+H3}), ${\tilde H}(t)$ in Eq. (\ref{inhomogeneous AKLT+H3}) is also non-integrable. As Fig. \ref{lsp-randomHAKLT+H3} shows, the level-spacing statistics of the model behave as that of the GUE. Also, the calculated $r$-value $r\simeq 0.598$ is compatible with the $r$-value of the GUE $\langle r_\mathrm{GUE} \rangle \simeq 0.603$. 

\begin{figure}[tbp]
\centering
\includegraphics[width=\linewidth]{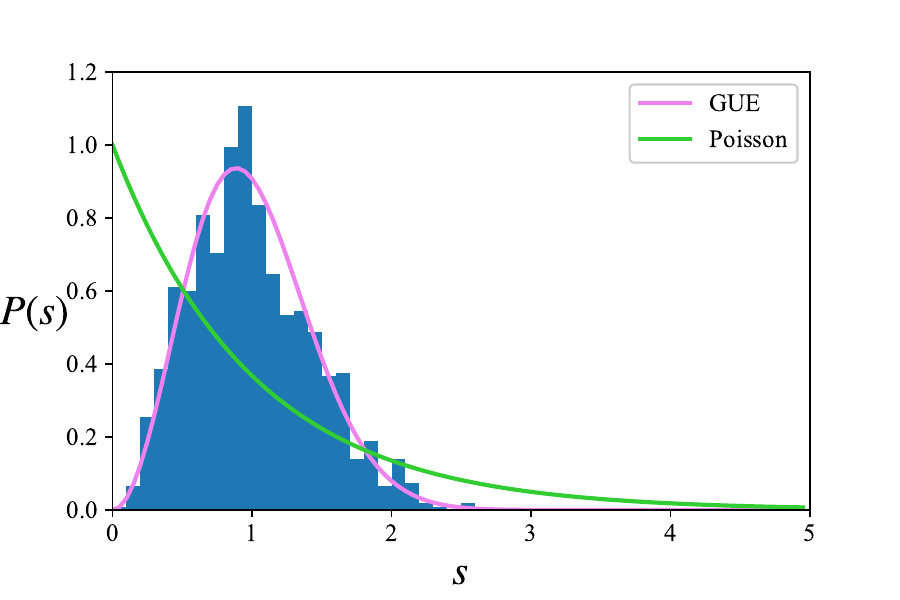}
\caption{Level-spacing statistics in the middle half of the spectrum of the inhomogeneous model ${\tilde H}(t)$ in Eq. (\ref{inhomogeneous AKLT+H3}) with $t=1$ and $L = 11$. Each $c_j$ is randomly chosen from $[-1, 1]$. 
The data are taken in the symmetry sector where $(\totS^z, \totS,{\cal F})=(0, 0, 1)$. 
The curves $P(s)_\mathrm{GUE}$ (magenta) and $P(s)_\mathrm{Poisson}$ (green) are shown for comparison. The distribution follows $P(s)_\mathrm{GUE}$. }
\label{lsp-randomHAKLT+H3}
\end{figure}

\subsubsection{Entanglement entropy}
In order to check whether the VBS state is a scar state, we compute entanglement entropies. The results are shown in Fig. \ref{EE_randomAKLT+H3}. Clearly, there are two entanglement outliers. The one at $E=0$ is the VBS state. The other one that is also far from other ordinary states is the ferromagnetic state. 
Its energy is $2\sum c_j$ (see Appendix \ref{energy_of_ferromagnetic}), and it is a trivial state rather than a scar because of the SU(2) symmetry of the model (\ref{inhomogeneous AKLT+H3}).

\begin{figure}[hbtp]
\centering
\includegraphics[width=\linewidth]{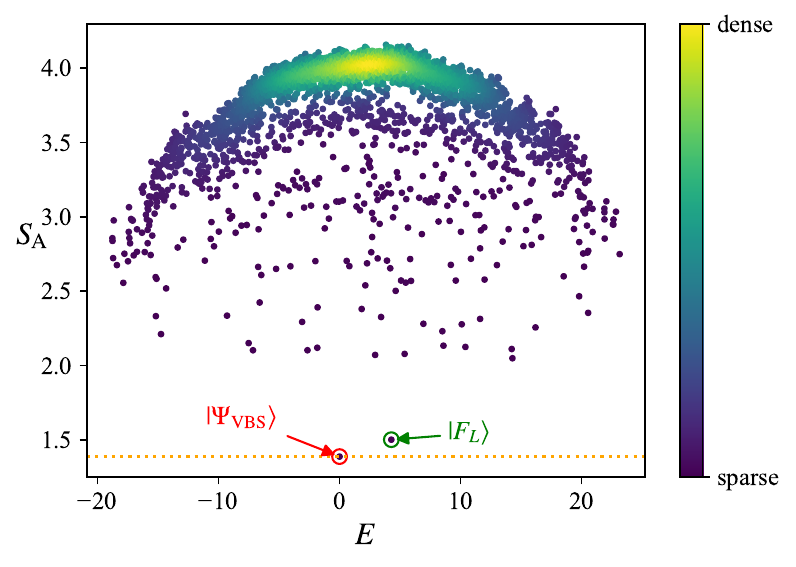}
\caption{Entanglement entropies in all eigenstates of the inhomogeneous model ${\tilde H}(t)$ in Eq. (\ref{inhomogeneous AKLT+H3}) for $L = 9$, $t= 1$ in the
$\totS^z = 0$ sector. {The density of data points is color coded.} 
Each $c_j$ is randomly chosen from $[-1, 1]$. The red and green circles indicate the VBS and the ferromagnetic states, respectively.}
\label{EE_randomAKLT+H3}
\end{figure}

\subsubsection{Other thermodynamic quantities} 
We provide further evidence that the VBS state is a scar state in this model. 
To this end, we study the expectation values of some physical observable in all energy eigenstates. The physical quantity we consider here is the AKLT Hamiltonian Eq. (\ref{HAKLT}) whose expectation value in a normalized state $\ket{\psi}$ is denoted as $\expval{H_\mathrm{AKLT}} \coloneqq \expval{H_\mathrm{AKLT}}{\psi}$.   
Figure \ref{expt_aklt} shows the numerical result for the distribution of $\expval{H_\mathrm{AKLT}}$. 
Clearly, the VBS state at $(0,0)$ can be distinguished from other eigenstates. 

\begin{figure}[tbp]
    \centering
    \includegraphics[width = \linewidth]{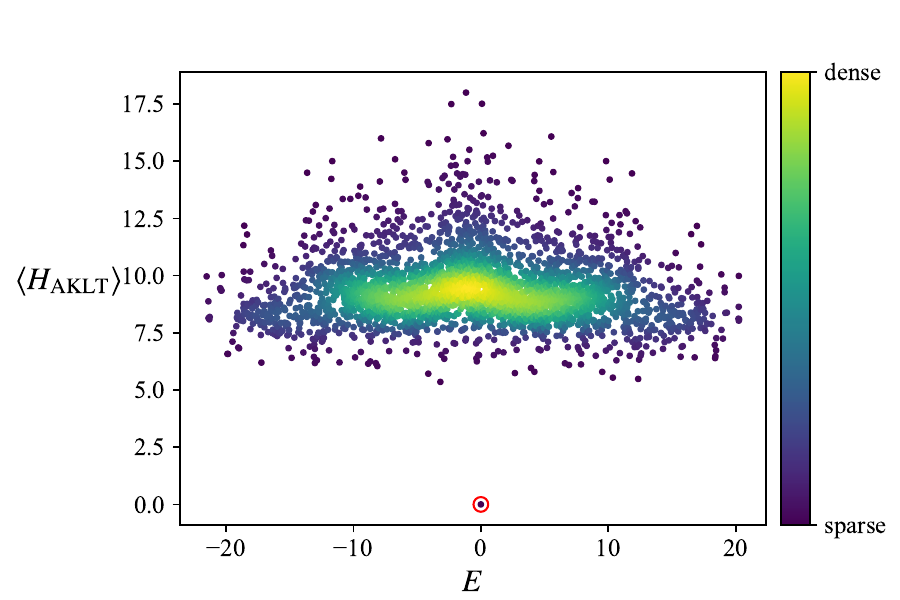}
    \caption{The expectation values of $H_\mathrm{AKLT}$ in all eigenstates of (\ref{inhomogeneous AKLT+H3}) with $t=1$, $L=9$ in the symmetry sector where $\totS^z = 0$. {The density of data points is color coded.} The red circle indicates the VBS state.}
    \label{expt_aklt}
\end{figure}

\subsection{Further generalizations}
Now we generalize the model in the previous subsection in two ways. First, we consider a generalization of the spin-$1$ AKLT model to include next-nearest-neighbor interactions. Based on the results obtained in~\cite{lange1994exact, nakano1996long}, we find that the VBS state in Eq. (\ref{eq:spin-1 VBS}) is annihilated by
\begin{align}
    \tilde{H}^{\prime}_{\mathrm{AKLT}} = \sum^L_{j=1} & d_j \biggl[
    {\bm S}_j \cdot {\bm S}_{j+1} + {\bm S}_{j+1} \cdot {\bm S}_{j+2} \nonumber \\
    & +\frac{1}{2} {\bm S}_j \cdot {\bm S}_{j+2} -\frac{1}{2} ({\bm S}_j \cdot {\bm S}_{j+2})^2 + 3
    \biggr],
\end{align}
where each coefficient $d_j$ can be any real number. This means that adding this term to the Hamiltonian $\tilde{H}(t)$ in Eq. (\ref{inhomogeneous AKLT+H3}) leaves the scar state unaffected. 
Second, we consider higher-order conserved charges $Q_{2k+1}$ ($k>1$) of the ${\rm SU}(3)$ Sutherland model, whose explicit expressions can be found in Ref.~\cite{GRABOWSKI1995299}. Since the VBS state is an integrable boundary state, it is annihilated by all $Q_{2k+1}$ (see Appendix \ref{H3VBS=0}). Thus, adding these terms with arbitrary coefficients does not affect the scar state. Combining these two generalizations leads to the following Hamiltonian
\begin{align}
   {\tilde H} (t_1, t_2, ..., t_n) = \tilde{H} (t_1) + \tilde{H}^\prime_{\mathrm{AKLT}} 
   + \sum^n_{k=2} t_k Q_{2k+1},
\end{align}
in which the VBS state survives as a scar state.

Let us finally discuss higher-spin generalizations. The spin-$1$ AKLT model can be generalized to models with ${\rm SO}(5)$ and more generally ${\rm SO}(2l+1)$ symmetry~\cite{scalapino1998so, frahm2001electronic, tu2008class}. The exact ground states of these models, which we dub ${\rm SO}(2l+1)$ VBS states, take the form of a matrix product state built from $2l+1$ gamma matrices. 
According to the general theory of integrable boundary states~\cite{pozsgay2019integrable}, the ${\rm SO}(2l+1)$ VBS state is an integrable boundary state of the ${\rm SU}(2l+1)$ Heisenberg model, meaning that the state is annihilated by all parity-odd conserved charges of the model. 
Thus, the construction of deformed models proceeds in much the same way as in the ${\rm SU}(2)$ case. We also note that the parent Hamiltonian of the ${\rm SO}(2l+1)$ VBS state can be inhomogeneous, like the one in Eq. (\ref{inhomogeneous AKLT}). We thus expect that the models constructed in this way are non-integrable for general $l$ and can be thought of as scarred models.

\section{Spin-1 AKLT model + scalar spin chirality}\label{sec:AKLT+CSC}

The model considered in this section is an example of a scarred model constructed by method (ii) with $n=0$ mentioned in Sec. \ref{subsec:method_of_construction}.

\subsection{Hamiltonian}
In this section, we consider another spin-$1$ model in which the VBS state in Eq. (\ref{eq:spin-1 VBS}) is a scar state. The Hamiltonian of the model is given by
\begin{equation}\label{AKLT+CSC}
H(t) = H_\mathrm{AKLT} + t C_\mathrm{SC},
\end{equation}
where $H_\mathrm{AKLT}$ is the AKLT Hamiltonian in Eq. (\ref{HAKLT}), and
\begin{equation}
C_\mathrm{SC} = \sum_{j=1}^L\bm{S}_j\cdot(\bm{S}_{j+1}\times\bm{S}_{j+2}),
\label{eq:spin-1 SC}
\end{equation}
is the scalar spin chirality term with ${\bm S}_j$ being the spin-1 operators in Eq. (\ref{spin1 operator}). 

This model has the same symmetries as $H(t)$ in Eq. (\ref{AKLT+H3}), i.e., SU(2) spin rotation, translation, spin-flip, and pseudo-time-reversal symmetries. (See Sec. \ref{sec_AKLT+H3} for details).  
Since the AKLT Hamiltonian is non-integrable, it is quite likely that the model Eq. (\ref{AKLT+CSC}) is not integrable either. This is indeed the case as can be seen from Fig. \ref{AKLT+CSC_levelstat}. 
Clearly, the level-spacing distribution is close to the GOE Wigner-Dyson distribution. 
To provide further evidence for this, we compute the $r$-value from the histogram and obtain $\langle r \rangle\simeq 0.536$, which agrees with $\langle r_\mathrm{GOE}\rangle \simeq 0.536$.

\begin{figure}
  \centering
  \includegraphics[width=\linewidth]{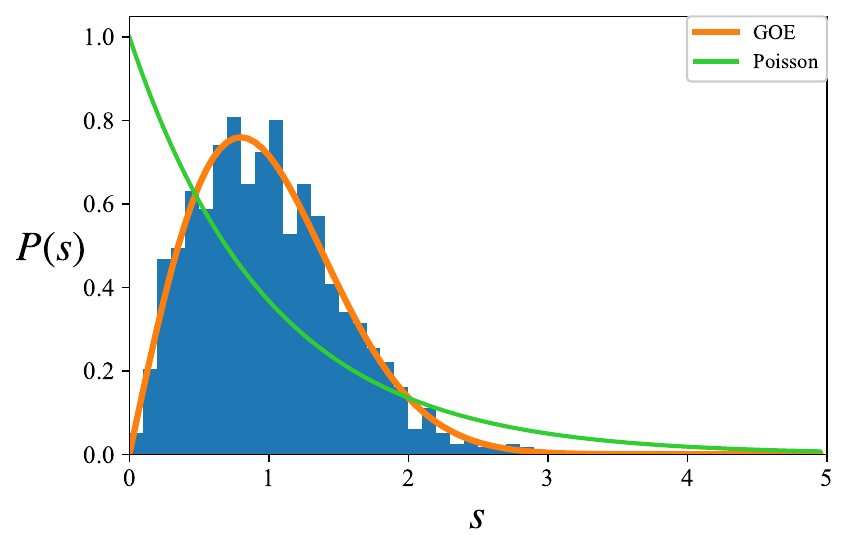}
  \caption{Level-spacing statistics in the middle half of the spectrum of the model (\ref{AKLT+CSC}) with $t=3$ and $L = 13$. 
  The data are taken in the symmetry sector where $(\totS^z, \totS,{\cal T}, {\cal F})=(0,0,1,1)$. 
  The curves $P(s)_\mathrm{GOE}$ (orange) and $P(s)_\mathrm{Poisson}$ (green) are shown for comparison. 
  The distribution follows $P(s)_\mathrm{GOE}$.}
  \label{AKLT+CSC_levelstat}
\end{figure}

\subsection{Scar state}
The VBS state $\ket{\Psi_\mathrm{VBS}}$ in Eq. (\ref{eq:spin-1 VBS}) is the zero-energy ground state of $H_\mathrm{AKLT}$. Interestingly, one can show that $\ket{\Psi_\mathrm{VBS}}$ is an eigenstate of $C_\mathrm{SC}$ with eigenvalue $0$ using its matrix product state representation (see Appendix \ref{appendix:CSCVBS=0.proof} for a proof).
Thus, $\ket{\Psi_\mathrm{VBS}}$ is a simultaneous eigenstate of $H_\mathrm{AKLT}$ and $C_\mathrm{SC}$, and is likely to be a scar state of the system. We checked it by computing half-chain entanglement entropies (Fig.~\ref{EE_AKLT+CSC}). The obtained results show that the VBS state has sufficiently low entanglement entropy compared to other states, confirming that it is indeed nonthermal. 

\begin{figure}
  \centering
  \includegraphics[width=\linewidth]{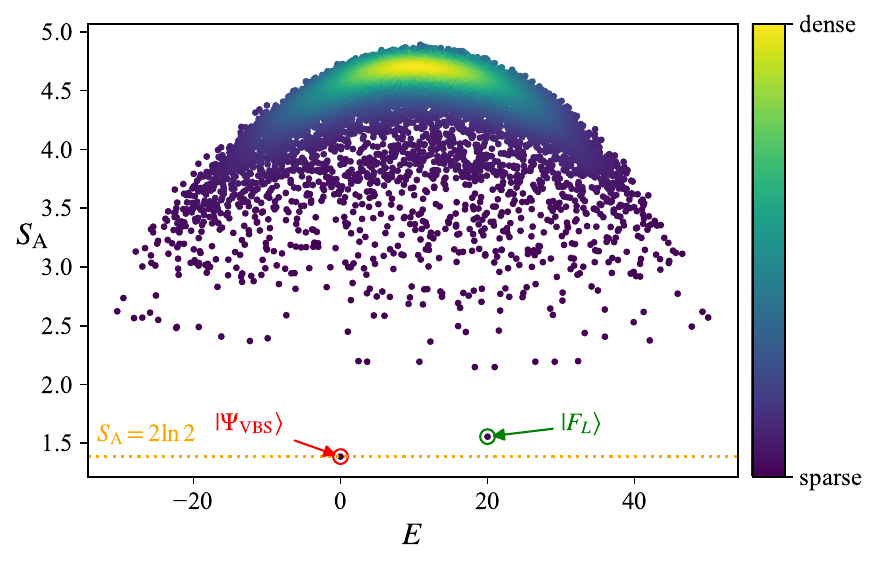}
  \caption{Entanglement entropies in all eigenstates of $H(t)$ in Eq. (\ref{AKLT+CSC}) with $t = 3$ for $L = 10$ in the $\totS^z = 0$ sector. {The density of data points is color coded.} The red and green circles indicate the VBS and ferromagnetic states, respectively. The orange dotted line indicates $S_\mathrm{A} = 2\ln 2\simeq 1.386$.}
  \label{EE_AKLT+CSC}
\end{figure}

\section{Perturbed spin-1 scalar spin chirality}\label{sec_CSC}
In this section, we consider a class of Hamiltonians consisting of the spin-1 scalar spin charity term $C_\mathrm{SC}$ in Eq. (\ref{eq:spin-1 SC}) and some other terms. They are examples of models constructed by method (ii) discussed in Sec. \ref{subsec:method_of_construction}. To provide some insight into what is special about this class of models, we have calculated the half-chain entanglement entropies in all eigenstates of the Hamiltonian 
\begin{equation}\label{SSC_only}
H_0 (h) = C_\mathrm{SC} + h \sum^L_{j=1} S^z_j,
\end{equation}
for $h=1$ and $L=8$. The results in Fig.~\ref{EE_SSC} indicate towers of low-entanglement states forming multiple arcs bridging $E = \pm 8$, but it is hard to distinguish them clearly because of degeneracies due to additional symmetries. In the following subsections, we classify these eigenstates and remove the degeneracies by introducing extra inhomogeneous terms. 

\begin{figure}
    \centering
    \includegraphics[width=\linewidth]{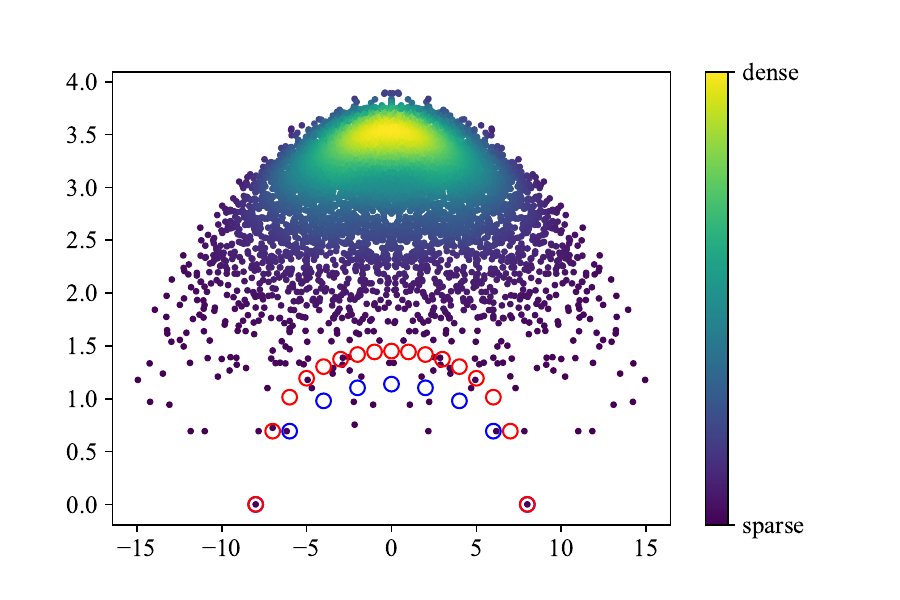}
    \caption{Entanglement entropies in all eigenstates of (\ref{SSC_only}) for $L = 8$, $h=1$. {The density of data points is color coded.} The red and blue circles indicate the positions of $\ket{{A}_{0, n}}$ in Eq. (\ref{eq:Amn}) and $\ket{B_{0, n}}$ in Eq. (\ref{eq:Bmn}), respectively.}
    \label{EE_SSC}
\end{figure}

\subsection{zero-energy states of \texorpdfstring{$C_\mathrm{SC}$}{CSC}}
Interestingly, one can explicitly construct some zero-energy states of $C_\mathrm{SC}$ by acting with certain ladder operators on reference states $\ket{\Uparrow} \coloneqq \ket{++\cdots +}$ and $\ket{\mathbf{0}} \coloneqq \ket{00\cdots 0} $, where $\ket{+}, \ket{0}$, and $\ket{-}$ are eigenstates of $S^z$ with eigenvalues $+1$, $0$, and $-1$, respectively. 
We define the ladder operators by
\begin{equation}
    \mathcal{O}_p^- = \sum_{j=1}^L e^{\mathrm{i}pj}S_j^-,\quad  \mathcal{Q}_p^- = \sum_{j=1}^L e^{\mathrm{i}pj}(S_j^-)^2, 
\end{equation}
where $S_j^- = S_j^x - \mathrm{i}S_j^y$. The subscript $p$ indicates that the operator carries momentum $p$, which takes the values $p= \frac{2\pi n}{L}$ with $n=0, 1, ..., L-1$. 
We find that the following states are zero-energy states of $C_\mathrm{SC}$:
\begin{align}
    \ket{A_{m, n}} &\coloneqq (\mathcal{O}_0^-)^m(\mathcal{O}_\pi^-)^n\ket{\Uparrow}\quad(0\leq m+n \leq 2L), \label{eq:Amn}\\
    \ket{B_{m, n}} &\coloneqq (\mathcal{O}_0^-)^m (\mathcal{Q}_0^-)^n\ket{\Uparrow} \quad(n\geq1, 1\leq m\leq 2L-2n) \label{eq:Bmn}. 
\end{align}
Note that $\ket{A_{m, n}}$ are well defined only for even $L$. To see that the above states are zero-energy states of $C_\mathrm{SC}$, it suffices to consider the case $m=0$. This is because the operator $\mathcal{O}_0^-$ is exactly the spin-lowering operator $\totS^- = \sum_{j=1}^L S^-_j$ commuting with $C_\mathrm{SC}$ due to the $\mathrm{SU}(2)$ symmetry. For convenience, we introduce the notation $\ket{\bar{A}_n} \coloneqq \ket{A_{0, n}}$, $\ket{\bar{B}_n} \coloneqq \ket{B_{0, n}}$ to denote the above states with $m=0$. One can prove that $\ket{\bar{A}_n}$ and $\ket{\bar{B}_n}$ are zero-energy eigenstates of $C_\mathrm{SC}$ by noting that $C_\mathrm{SC}$ and either $\mathcal{O}_\pi^-$ or $\mathcal{Q}_0^-$ satisfy a restricted spectrum generating algebra of order $2$ \cite{moudgalya2020eta}. See Appendix \ref{appendix:AB_proof} for the proof. 

The towers of states $\ket{A_{m, n}}$ and $\ket{B_{m, n}}$ do not exhaust the zero-energy manifold of $C_\mathrm{SC}$. In fact, there are other towers of zero-energy states generated by $\mathcal{O}_p^-$: 
\begin{align}
    \ket{+_m} &\coloneqq (\mathcal{O}_0^+)^m\ket{\mathbf{0}}, \label{eq:plus}\\
    \ket{-_m} &\coloneqq (\mathcal{O}_0^-)^m\ket{\mathbf{0}}, \label{eq:minus}\\
    \ket{K_{m, p}} &\coloneqq  (\mathcal{O}_0^-)^m \mathcal{O}_p^-\mathcal{O}_{-p}^-\ket{\Uparrow},
    \label{eq:K}
\end{align}
where $0 \leq m\leq 2L-2$ and $p = \frac{2\pi}{L}, \frac{4\pi}{L}, \cdots, \frac{2\pi}{L} ( \lfloor \frac{L}{2} \rfloor -1 )$. 
Again since $\mathcal{O}_0^-$ commutes with $C_\mathrm{SC}$, it suffices to consider the case $m=0$. It is easy to see that $\ket{+_0}=\ket{-_0}=\ket{\mathbf{0}}$ is annihilated by each local term in $C_\mathrm{SC}$, and hence $C_\mathrm{SC} \ket{\pm_m} = 0$. To see that $C_\mathrm{SC} \ket{K_{m, p}}=0$, it is convenient to rewrite the state $\ket{K_{0, p}}$ as 
\begin{align}
    \ket{K_{0,p}} = \sum^L_{n=1} e^{{\rm i}n p} \ket{\Phi_n}
\end{align}
where
\begin{align}\label{eq:K_to_phi}
    \ket{\Phi_n} = \sum_{j=1}^L S^-_jS^-_{j+n}\ket{\Uparrow}. 
\end{align}
One can show that each $\ket{\Phi_n}$ is annihilated by $C_\mathrm{SC}$. Therefore, it follows that $C_\mathrm{SC} \ket{K_{m, p}}=0$. See Appendix \ref{appendix:AB_proof} for a detailed proof.

\medskip

In this way, we have constructed a number of exact zero-energy states of $C_\mathrm{SC}$. It should be noted that they are exact eigenstates of $H_0 (h) = C_\mathrm{SC}+h {\cal S}^z$ as well because each of them is a superposition of states with fixed ${\cal S}^z$. We also remark that the obtained states in Eqs. (\ref{eq:Amn})-(\ref{eq:K}) are not orthogonal to each other. In fact, they are not even linearly independent.  
This can be seen by considering, for example, the $L=3$ site chain. In this case, $\ket{B_{1,1}}$, $\ket{B_{3,0}}$, and $\ket{+_0}$ satisfy $3 \ket{B_{1,1}}- \ket{B_{3,0}} +12 \sqrt{2} \ket{+_0}=0$, and hence linearly dependent. In Appendix \ref{appendix:AB_proof}, we derive a lower bound on the number of zero-energy states of $C_\mathrm{SC}$, which proves that the number grows exponentially with the system size. Such an exponentially large degeneracy can be a source of QMBS and Hilbert space fragmentation, as discussed in the context of geometrically frustrated systems~\cite{lee2020exact, lee2021frustration}.

In the following, we will consider $H_0 (h)$ in Eq.(\ref{SSC_only}) under tailored disorder, which is designed such that some of the obtained zero-energy states of $C_\mathrm{SC}$ remain intact. 

\subsection{Random single-ion anisotropy --- scarred \texorpdfstring{$\ket{\bar{B}_n}$}{Lg}}\label{subsec:B_n}
In this subsection, we focus on the model in which $\ket{\bar{B}_n}=(\mathcal{Q}_0^-)^n\ket{\Uparrow}, (n = 1, 2, \ldots, L)$ become scars. We consider the Hamiltonian
\begin{equation}\label{SSC+S2}
    H_1(h, \{D_j\}_j) = C_\mathrm{SC} + h\sum_{j=1}^L S^z_j + \sum_{j=1}^L D_j(S^z_j)^2, 
\end{equation}
where $D_j$ are any real numbers. In what follows, we omit the dependence of $H_1$ on $h$ and $\{D_j\}_j$ unless necessary. 

Since $H_1$ commutes with ${\cal S}^z$, one can split the Hilbert space into subspaces labeled by the eigenvalues of ${\cal S}^z$. The ${\cal S}^z = 0$ subspace is special in that it is invariant under spin flip ${\cal F}$. Thus, this subspace can be further divided into two sectors: one with ${\cal F}=1$ and the other with ${\cal F}=-1$. We have analyzed the level-spacing statistics in the sector $({\cal S}^z, {\cal F}) = (0, 1)$ and found that the distribution is close to the GUE Wigner-Dyson distribution. We also calculated the $r$-value and obtained $\langle r \rangle \simeq 0.593$, which is consistent with the GUE. 

\subsubsection{Tower of eigenstates}
The states $\ket{\bar{B}_n}$ constitute a tower of eigenstates of $H_1$. This can be seen as follows. In the previous subsection, we have already shown that each $\ket{\bar{B}_n}$ is a simultaneous eigenstate of $C_\mathrm{SC}$ and ${\cal S}^z$. Thus it remains to show that these states are eigenstates of the third term on the RHS of Eq. (\ref{SSC+S2}), which we call the $D$ term. 
To show this, we take a closer look at $\ket{\bar{B}_n}$. In the basis of $S^z_j$ eigenstates, they read
\begin{align}
    \ket{\bar{B}_0} &= \ket{\Uparrow} =\ket{++\cdots+}, \\
    &\qquad\vdots \nonumber \\
    \ket{\bar{B}_n} &= (\mathcal{Q}_0^-)^n \ket{\Uparrow} = 2^n\times n!\sum_{1\leq j_1<j_2<\cdots<j_n\leq L}\nonumber \\
    &\quad\ket{++\cdots-_{j_1}\cdots-_{j_2} \cdots-_{j_n}\cdots +}, \\
    &\qquad\vdots \nonumber \\
    \ket{\bar{B}_L} &= 2^L \times L! \ket{--\cdots-}.
\end{align} 
As one can see, each $\ket{\bar{B}_n}$ consists of sequences of $\ket{\pm}$, in which the state $\ket{0}$ never appears. Therefore, each $\ket{\bar{B}_n}$ is an eigenstate of $(S_j^z)^2$ with eigenvalue $1$ for all $j$, implying that $\ket{\bar{B}_n}$ is an eigenstate of the $D$ term with eigenvalue $\sum_j D_j$.

One can calculate the half-chain entanglement entropy of $\ket{\bar{B}_n}$ in the same way as the ferromagnetic states with spin-$1/2$. The result reads  
\begin{equation}\label{eq:SA_Bn}
    S_A(\ket{\bar{B}_n}) = -\sum_{k=0}^{n}\frac{\binom{L/2}{k}\binom{L/2}{n-k}}{ \binom{L}{n}}\ln \frac{\binom{L/2}{k}\binom{L/2}{n-k}}{\binom{L}{n}}.
\end{equation}
(see Appendix \ref{appendix:ferro_ent_half} for details). The state $\ket{\bar{B}_{L/2}}$ has the largest entanglement entropy in $\{\ket{\bar{B}_n}\}_{n = 0, 1, \ldots, L}$, and the asymptotic form of $S_A (\ket{\bar{B}_{L/2}})$ for $L \gg 1$ is given by
\begin{equation}
    S_A(\ket{\bar{B}_{L/2}}) \approx \frac{1}{2}\left(\ln \frac{\pi L}{8}+1\right),
\end{equation}
which obeys a sub-volume law. 

Figure \ref{EE_SSC+S2_full} shows the half-chain entanglement entropy as a function of energy for $H_1$ with $L = 8$, $h = 1$, and $D_j$ randomly chosen from $[-1,1]$. As one can see, the states $\ket{\Bar{B}_n}$ form a tower of low-entanglement states. They are, however, not well separated from other states due to the presence of other low-entanglement states. 
The obtained result also suggests that the energy $E=0$ is highly degenerate even in the presence of disorder in $D_j$. Since we obtain superpositions of degenerate eigenstates in numerical diagonalization, the entanglement entropy of the state $\ket{\Bar{B}_{L/2}}$ at this energy is obscured. This may be the reason why a data point is missing in the red circle at $E=0$.

In order to resolve the degeneracy, we now consider the sector with fixed quantum numbers $(\mathcal{S}^z, \mathcal{F})=(0,1)$. Figure \ref{EE_SSC+S2_Sz0} shows the entanglement entropies of the eigenstates of $H_1$ in this symmetry sector. Clearly, the state $\ket{\bar{B}_{L/2}}$ is isolated from the other states, indicating its nonthermal nature. 
In addition to this state, there are two other entropy outliers: $\ket{\psi_s}=\ket{+-+-\cdots}+\ket{-+-+\cdots}$ and $\ket{\mathbf{0}}$. These states are zero-energy states of $C_\mathrm{SC}$ that remain intact under the influence of the $D$ term. However, $\ket{\mathbf{0}}$ cannot be thought of as a scar. This is because the projection onto this state, namely $\mathcal{P}=\prod_{j=1}^L(1-(S_j^z)^2)$, commutes with the Hamiltonian $H_1$, which simply means that $\ket{\mathbf{0}}$ is the state that is uniquely specified by the eigenvalue $1$ of $\mathcal{P}$. 
Thus, out of the three entropy outliers, only $\ket{\bar{B}_{L/2}}$ and $\ket{\psi_s}$ are identified as QMBS. 
Figure \ref{fig:h1_aklt} shows the expectation values of a local observable for all eigenstates of $H_1$ in the sector with $(\mathcal{S}^z, \mathcal{F})=(0,1)$. As an observable, we consider $H_\mathrm{AKLT}$ in Eq. (\ref{HAKLT}). As we can see, the expectation values in these three states are isolated from other thermal states, which implies a violation of strong ETH. 

\begin{figure}
    \centering
    \includegraphics[width = \linewidth]{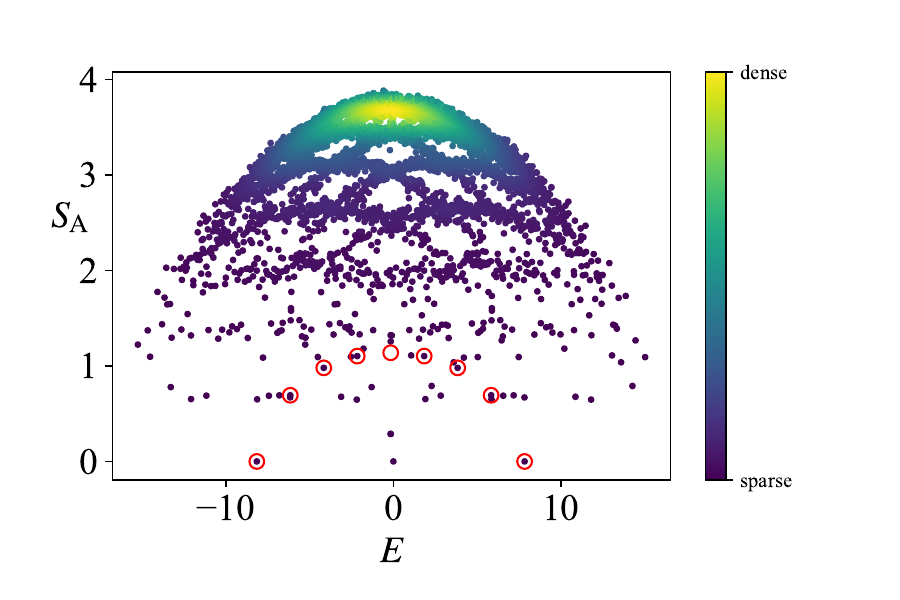}
    \caption{Entanglement entropies in all eigenstates of $H_1(h, \{D_j\})$ in Eq. (\ref{SSC+S2}) for $L = 8$, $h = 1$, $D_j\in [-1, 1]$. {The density of data points is color coded.} The red circle corresponds to $\ket{\bar{B}_n}$ $(n = 0, 1, \ldots, L)$.}
    \label{EE_SSC+S2_full}
\end{figure}
\begin{figure}[tbh]
    \centering
    \includegraphics[width=\linewidth]{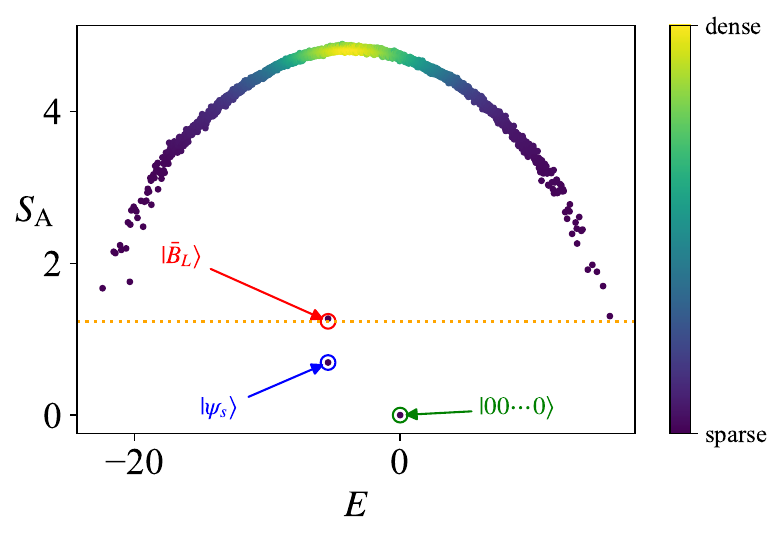}
    \caption{Entanglement entropies in all eigenstates of $H_1(h, \{D_j\})$ in Eq.~(\ref{SSC+S2}) for $L = 10$, $h = 1$ in the symmetry sector $(\totS^z, \mathcal{F}) = (0, 1)$. Each $D_j$ is randomly chosen from $[-5, 5]$. {The density of data points is color coded.} The entanglement entropy of each state obeys a volume-law except for $\ket{\bar{B}_{L/2}}$, the symmetric state $\ket{\psi_s} = \ket{+-+-\cdots} + \ket{-+-+\cdots}$, and the trivial state $\ket{\mathbf{0}}=\ket{00\cdots 0}$. The orange dotted line indicates $S_A = 1.236$, which is obtained from Eq. (\ref{eq:SA_Bn}) for $L = 10$. }
    \label{EE_SSC+S2_Sz0}
\end{figure}
\begin{figure}[tbh]
    \centering
    \includegraphics[width=\linewidth]{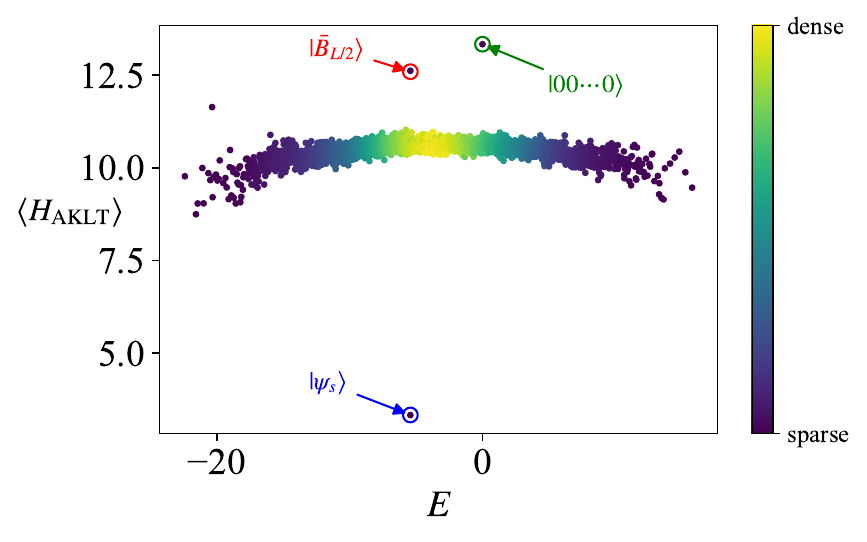}
    \caption{$\expval{H_\mathrm{AKLT}}$ in all eigenstates of $H_1(h, \{D_j\})$ in Eq.~(\ref{SSC+S2}) for $L = 10$, $h = 1$ in the symmetry sector $(\totS^z, \mathcal{F}) = (0, 1)$. Each $D_j$ is randomly chosen from $[-5, 5]$. {The density of data points is color coded.} The red, blue, and green circles indicate $\ket{\bar{B}_{L/2}}$, $\ket{\psi_s}$, and $\ket{00\cdots0}$, respectively.}
    \label{fig:h1_aklt}
\end{figure}

\subsubsection{Dynamics}
To illustrate the nonthermal features of scarred states, we study the quench dynamics of the system. 
The initial states we consider are coherent states of $\mathcal{Q}_0^-$, namely superpositions of $\ket{\bar{B}_n}$ defined as
\begin{equation}
    \ket{\beta}\coloneqq C_L^{-1}\exp(\beta\mathcal{Q}_0^-)\ket{\Uparrow} = C_{L}^{-1}\sum_{n=0}^L\frac{\beta^n}{n!}\ket{\bar{B}_n},
\end{equation}
where $\beta \in \mathbb{C}$ and $C_L \coloneqq (1+4\abs{\beta}^2)^{\frac{L}{2}}$ is a normalization factor such that $\langle \beta|\beta\rangle=1$. 
Under time evolution by the Hamiltonian (\ref{SSC+S2}), the initial state $\ket{\beta}$ evolves into
\begin{equation}
    \ket{\beta(t)} \coloneqq e^{-\mathrm{i}H_1 t}\ket{\beta}
\end{equation}
at time $t$. Since the states $\ket{\bar{B}_n}$ are common eigenstates of the $D$ term with eigenvalue ${\cal D} =\sum_j D_j$, we can rewrite it as
\begin{align}
    \ket{\beta(t)} = e^{-\mathrm{i} \mathcal{D} t} e^{-\mathrm{i}h \mathcal{S}^z t}\ket{\beta}.
\end{align}

We first consider the fidelity between initial and time-evolved states. For an arbitrary initial state $\ket{\phi(0)}$, it is defined by
\begin{equation}\label{eq:fidelity}
    \mathcal{F}(t) = \abs{\braket{\phi(0)}{\phi(t)}},
\end{equation}
where $\ket{\phi(t)} = e^{-\mathrm{i}H_1t}\ket{\phi(0)}$. For the coherent states $\ket{\beta}$, we can calculate the fidelity as
\begin{align}\label{eq:fidelityB}
    \mathcal{F}(t) &= C_L^{-2}\abs{\sum_{m, n=0}^L \frac{(\beta^*)^m\beta^n}{m!n!}\mel{\bar{B}_m}{e^{-\mathrm{i}h \mathcal{S}^z t}}{\bar{B}_n}} \nonumber\\
    &= C_L^{-2} \abs{\sum_{n=0}^L\frac{|\beta|^{2n}}{(n!)^2}e^{2\mathrm{i}hnt}\braket{\bar{B}_n}{\bar{B}_n}} \nonumber\\
    &= C_L^{-2} \abs{\sum_{n=0}^L(4|\beta|^2e^{2\mathrm{i}ht})^n\binom{L}{n}} \nonumber\\
    &= \abs{\frac{1+4|\beta|^2e^{2\mathrm{i}ht}}{1+4\abs{\beta}^2}}^L.
\end{align}
Clearly, it is a periodic function with period $T=\pi/h$, exhibiting perfect revivals, i.e., $\mathcal{F}(t)=1$ at $t=nT$ ($n \in \mathbb{N}$), irrespective of the system size. We show in Fig. \ref{fig:F(t)_b} the numerical results of the fidelity dynamics with several initial states. As we can see, the coherent states show perfectly periodic revivals, indicating that they never thermalize. This is in stark contrast to the fidelity of a generic state, which decays rapidly to zero. We remark that a perfect revival of the initial state after a time of at most ${\cal O} ({\rm poly}(L))$, in general, implies the existence of QMBS~\cite{alhambra2020revivals}.  

\begin{figure}[tbp]
    \centering
    \includegraphics[width = \linewidth]{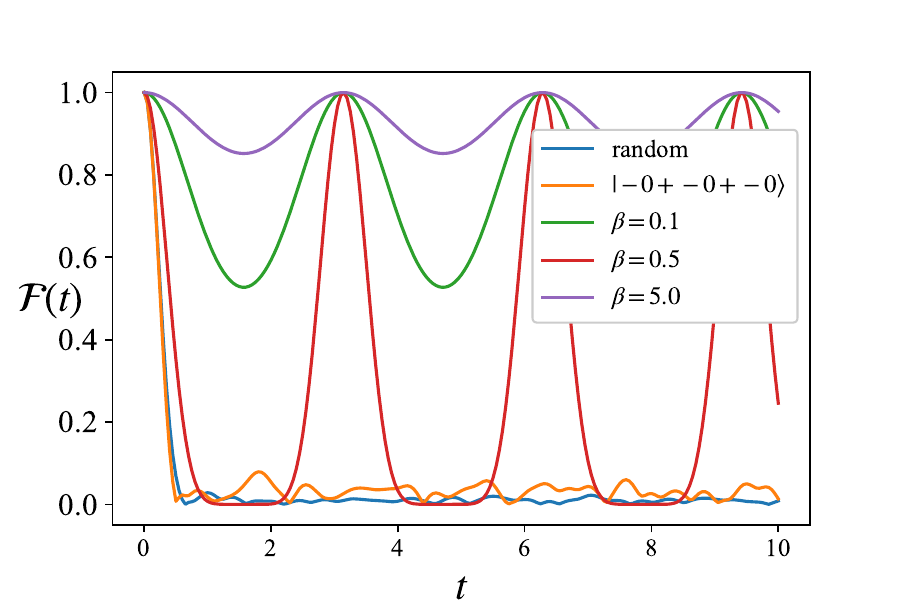}
    \caption{The dynamics of the fidelity with $L=8$, $h = 1$, and $D_j$ $(j = 1, 2, \ldots, L)$ chosen randomly from $[-1, 1]$. Perfectly periodic revivals can be seen when the initial state is a coherent state, whereas for other generic states the fidelity decays rapidly to zero.}
    \label{fig:F(t)_b}
\end{figure}

We next examine the time evolution of the half-chain entanglement entropy for several initial states. 
In the following, we consider the case of even $L$. The half-chain entanglement entropy of the coherent state $\ket{\beta(t)}$ does not evolve in time. In fact, it is always $0$. This can be seen by noting that $\ket{\beta(t)}$ is just a product of two states:
\begin{align}
    \ket{\beta(t)} & = e^{-\mathrm{i} \mathcal{D} t} e^{-\mathrm{i}h \mathcal{S}_A^z t} \ket{\beta}_A \otimes e^{-\mathrm{i}h \mathcal{S}_B^z t} \ket{\beta}_B, 
\end{align}
where $\mathcal{S}^z_{A (B)} = \sum_{j \in A (B)} S^z_j$ and 
\begin{align}
    \ket{\beta}_{A (B)} := C_{L/2}^{-1} \exp (\beta \sum_{j \in A (B)} (S^-_j)^2) \ket{\Uparrow}_{A(B)}
\end{align}
with $\ket{\Uparrow}_{A(B)}=\otimes_{j \in A (B)}\ket{+}_j$.  
Let us slightly generalize the initial state by considering a superposition of coherent states with different $\beta$, namely, $\sum_{i=1}^n c_i\ket{\beta_i}$. 
Since $\ket{\beta_i}_L = \ket{\beta_i}_A \otimes \ket{\beta_i}_B$, the entanglement entropy is obtained as~\cite{shi2006classical, katsura2010entanglement}
\begin{equation}
    S_A = -\frac{\Tr [M^2\ln M^2] }{\Tr [M^2] }+\ln{\Tr [M^2]},
\end{equation}
where the matrix elements of $M$ are defined as
\begin{equation}
    M_{i,j} = c_i^*c_j\braket{\beta_i}{\beta_j}_A = \frac{c_i^*c_j(1+4\beta_i^*\beta_j)^N}{(1+4\abs{\beta_i}^2)^{\frac{N}{2}}(1+4\abs{\beta_j}^2)^\frac{N}{2}}. 
\end{equation}
It is then clear that the entanglement entropy for this class of states is constant in time.

Figure \ref{fig:ent_dynamics_b} shows the time evolution of the half-chain entanglement entropy $S_A$ for several initial states. Clearly, the coherent states and their superposition do not gain entanglement. By contrast, $S_A$ of the product state $\ket{-0+-0+-0}$ grows rapidly and saturates near the Page value \cite{page1993average} of a random state
\begin{equation}\label{eq:pagevalue}
    S_\mathrm{Page} = \frac{L}{2}\ln 3 - \frac{1}{2}.
\end{equation}
\begin{figure}
    \centering
    \includegraphics[width=\linewidth]{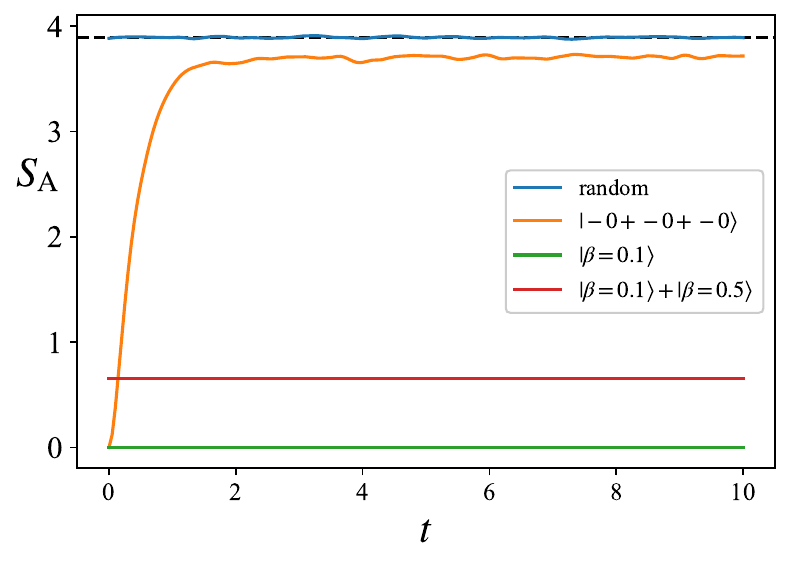}
    \caption{Dynamics of the half-chain entanglement entropies with the same setup as Fig. \ref{fig:F(t)_b}. The dashed line indicates the Page value $S_\mathrm{page}$ (Eq. (\ref{eq:pagevalue})). Initial coherent states have constant entanglement entropy, but that of $\ket{-0+-0+-0}$ rapidly grows and saturates near $S_\mathrm{Page}$.}
    \label{fig:ent_dynamics_b}
\end{figure}

\subsection{Scarred \texorpdfstring{$\ket{\bar{A}_n}$}{Lg}}
\subsubsection{Tower of eigenstates}
As we have seen in the previous subsection, the key to finding a suitable perturbation is to find an operator that acts on a set of target states as a constant. To find such an operator for $\ket{\bar{A}_n}$, let us take a closer look at these states. 
The operator $\mathcal{O}_\pi^-$ that generates $\ket{\bar{A}_n}$ is invariant under translation by two sites. Therefore, considering its action on the two neighboring sites may suggest a suitable 
operator. The operator $\mathcal{O}_\pi^-$ acts as $S^-_j - S^-_{j+1}$ at the two neighboring sites $(j, j+1)$, and as a result, we get the states listed in Table \ref{A_table} by repeatedly applying it to the state $\ket{++}_{j, j+1}$.
\begin{table}[tbhp]
    \centering
    \begin{tabular}{c|c|c|c}
        $n$ & state & $S^z_j+S^z_{j+1}$ & $P_{j, j+1}$  \\
        \hline
        $0$ & $\ket{++}$ & $2$ & $1$ \\
        $1$ & $\ket{0+}-\ket{+0}$ & $1$ & $-1$ \\
        $2$ & $\ket{-+}-2\ket{00}+\ket{+-}$ & $0$ & $1$ \\
        $3$ & $\ket{0-}-\ket{-0}$ & $-1$ & $-1$ \\
        $4$ & $\ket{--}$ & $-2$ & $1$
    \end{tabular}
    \caption{The explicit form of $(S_j^--S_{j+1}^-)^n\ket{++}_{j, j+1}$ up to constant factors. The third and fourth columns indicate the eigenvalues of the corresponding operators for each state.}
    \label{A_table}
\end{table}

Each state in the table is a simultaneous eigenstate of $S_j^z+S_{j+1}^z$ and $P_{j, j+1}$, the permutation operator between site $j$ and $j+1$ (see Eq. (\ref{perm})). From this result, we see that $(-1)^{S_j^z+S_{j+1}^z}P_{j, j+1}=1$ holds in the subspace spanned by these states.  
Therefore, each $\ket{\bar{A}_n}$ is an eigenstate of the following Hamiltonian:
\begin{equation}\label{perm_ham}
\begin{split}
    H_2(h, \{D_j\}_j) = & C_\mathrm{SC} + h\sum_{j=1}^LS_j^z  \\
    & + \sum_{j=1}^L D_j(-1)^{S_j^z+S_{j+1}^z}P_{j, j+1},
\end{split}
\end{equation}
where $D_j$ are any real numbers. 
{One can, in principle, construct a more complicated Hamiltonian involving more than two-spin interactions using the same strategy.}

In what follows, we assume that the number of sites $L$ is even and omit the dependence of $H_2$ on $h$ and ${D_j}$ unless necessary. Interestingly, the states $\ket{\bar{B}_n}$ are also eigenstates of $H_2$ since $\ket{\bar{B}_n}$ is totally symmetric, i.e., $P_{i, j}\ket{\bar{B}_n} = \ket{\bar{B}_n}$ for any $i, j$ and $S_j^z+S_{j+1}^z$ is even for any $j$. 
We can see from Fig. \ref{fig:perm_full} that the states $\ket{\bar{A}_n}$ behave as QMBS in this system. On the other hand, the data points for the states $\ket{\bar{B}_n}$ are mostly missing due to the degeneracies. 

Since the last term in Eq. (\ref{perm_ham}) does not break U$(1)$ symmetry associated with $\totS^z$, we can divide the Hilbert space into subspaces according to the eigenvalues of $\totS^z$. 
The $\totS^z = 0$ subspace can be further decomposed into two sectors with opposite ${\cal F}$. We have analyzed the level-spacing statistics in the sector $({\cal S}^z, {\cal F}) = (0, 1)$ and found that the distribution is close to the GUE Wigner-Dyson distribution. We also calculated the $r$-value and obtained $\langle r \rangle \simeq 0.593$, which is consistent with the GUE. 

\begin{figure*}[tb]
\captionsetup[subfloat]{}
    \begin{center}
    \subfloat[][]{
        \includegraphics[width=0.45\linewidth]{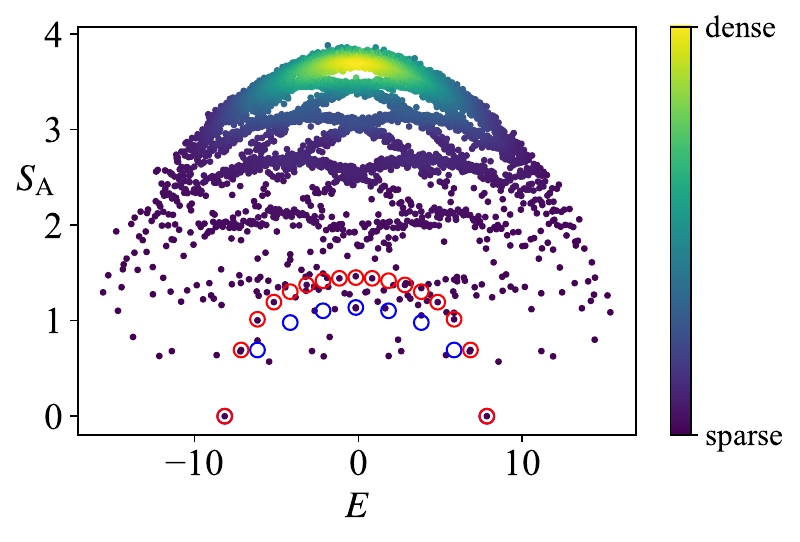}
        \label{fig:perm_full}
        }
    \quad
    \subfloat[][]{
        \includegraphics[width=0.45\linewidth]{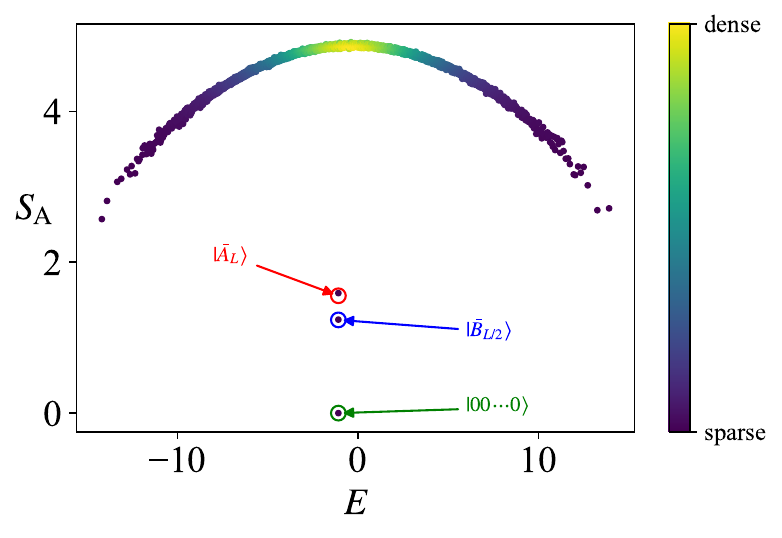}
        \label{fig:perm_0}
    }
    \end{center}
    \vspace*{-4mm}
    \caption{(a) Entanglement entropies in all eigenstates in Eq. (\ref{perm_ham}) with $L=10, h=1$. Each $D_j$ is randomly chosen from $[-1, 1]$. {The density of data points is color coded.} 
    The states $\ket{\bar{A}_n}$ (red circles) and $\ket{\bar{B}_n}$ (blue circles) have relatively low entanglement entropy. Note that although $\ket{\bar{A}_n}$ and $\ket{\bar{B}_n}$ are eigenstates of $H_2(h, \{D_j\}_j)$, some circles do not overlap with corresponding data points because of degeneracies. 
    (b) Entanglement entropies in all eigenstates of inhomogeneous model Eq. (\ref{perm_ham}) for $L = 10, h=1$ in the symmetry sector $\totS^z = 0$. Each $D_j$ is randomly chosen from $[-1, 1]$. {The density of data points is color coded.} 
    The red, blue and green circles indicate $\ket{\bar{A}_L}$, $\ket{\bar{B}_{L/2}}$, and $\ket{000\cdots}$, respectively.}
    \label{fig:perm}
\end{figure*}

Figure \ref{fig:perm_0} shows the entanglement entropies of eigenstates of $H_2$ in the sector $\totS^z = 0$. In the figure, both $\ket{\bar{A}_L}$ and $\ket{\bar{B}_{L/2}}$ can be identified as entanglement outliers, which leads us to the conclusion that $\ket{\bar{B}_n}$ are also QMBS for $H_2$.
It should be noted that another entanglement outlier, i.e., 
$\ket{\mathbf{0}}=\ket{000\cdots}$ in Fig. \ref{fig:perm_0}, cannot be thought of as a scar. This is because the projection onto this state, namely $\mathcal{P}=\prod_{j=1}^L(1-(S_j^z)^2)$, commutes with $H_2$, which simply means that this state is uniquely specified by the eigenvalue $1$ of $\mathcal{P}$. 

\subsubsection{Dynamics}
Similarly to the previous subsection, we introduce a coherent state of ${\cal O}_{\pi}$, namely a superposition of $\ket{\bar{A}_n}$ defined by
\begin{equation}
    \ket{\alpha} = \tilde{C}^{-1}_L \exp (\alpha\mathcal{O}_\pi^-) \ket{\Uparrow} = \tilde{C}^{-1}_L \sum_{n=0}^{2L} \frac{\alpha^n}{n!}\ket{\bar{A}_n},
\end{equation}
where $\tilde{C}_L = (1+\abs{\alpha}^2)^L$ is the normalization constant. 
When the initial state is the coherent state $\ket{\alpha}$, the fidelity defined in Eq. (\ref{eq:fidelity})) can be computed as
\begin{align}\label{eq:fidelityA}
    \mathcal{F}(t) &= \tilde{C}^{-2}_L\abs{\sum_{m,n=0}^{2L}\frac{\alpha^{*m}\alpha^n}{m!n!}\mel{\bar{A}_m}{e^{-\mathrm{i}H_2t}}{\bar{A}_n} } \nonumber \\
    &= \tilde{C}^{-2}_L\abs{\sum_{n=0}^{2L}\frac{\abs{\alpha}^{2n}}{(n!)^2}e^{\mathrm{i}hnt}\braket{\bar{A}_n}} \nonumber \\
    &= \tilde{C}^{-2}_L\abs{\sum_{n=0}^{2L}(\abs{\alpha}^2e^{\mathrm{i}ht})^n\binom{2L}{n}} \nonumber \\
    &= \abs{\frac{1+\abs{\alpha}^2e^{\mathrm{i}ht}}{1+\abs{\alpha}^2}}^{2L}.
\end{align}
Thus, it attains the maximum fidelity $\mathcal{F}(t) = 1$ periodically with period $T = 2\pi/ h$. Figure \ref{fig:fidelity_A} shows the numerical results of the fidelity dynamics with several initial states. Here we set $h=1$. 
Clearly, the coherent states attain $\mathcal{F}(t) = 1$ periodically with period $2\pi$, whereas the fidelities of the other states decay rapidly to zero. 
We also calculated the time evolution of the entanglement entropy for these states and obtained a result similar to that shown in Fig. \ref{fig:ent_dynamics_b}. See Appendix \ref{appendix:dynamics} for the dynamics from a more complex initial state. 

\begin{figure}[bp]
    \centering
    \includegraphics[width = \linewidth]{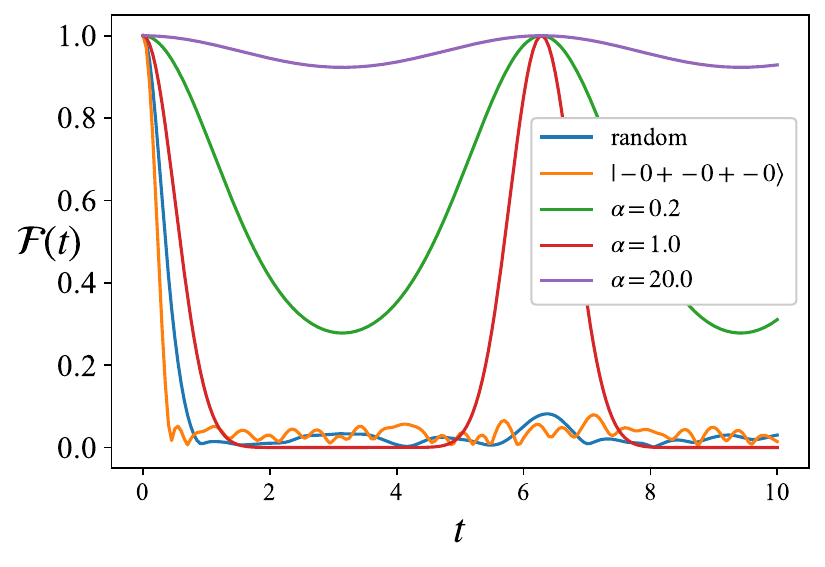}
    \caption{The dynamics of the fidelity with $h = 1$, $L = 8$, and $D_j$ $(j = 1, 2, \ldots, L)$ chosen randomly from $[-1, 1]$. The fidelity shows perfectly periodic revivals when the initial state is a coherent state, whereas it decays rapidly to zero for other generic states.}
    \label{fig:fidelity_A}
\end{figure}

\subsection{Two-dimensional model}
In the same way as before, we can construct two-dimensional models with QMBS. Here we consider the generalization of the model $H_1$ in Eq. (\ref{SSC+S2}) on a triangular lattice with periodic boundary conditions (Fig. \ref{fig:pic_of_triangle}). Let $\Lambda$ be the triangular lattice. The Hamiltonian is 
\begin{equation}\label{eq:2-dim}
    H^{\rm 2d}_1 (h, \{D_j\}_j) =  C^{\rm 2d}_\mathrm{SC} + h\sum_{j \in \Lambda} S^z_j+\sum_{j \in \Lambda} D_j(S^z_j)^2,
\end{equation}
where
\begin{equation}
    C^{\rm 2d}_\mathrm{SC} = \sum_{\triangle / \bigtriangledown} \bm{S}_j\cdot(\bm{S}_k\times\bm{S}_l), \label{eq_triangularCSC}
\end{equation}
and the summation is over all triangles. The subscripts $j, k$, and $l$ are in the clockwise (counterclockwise) order in each upward (downward) triangle.
\begin{figure}[pb]
    \centering
    \includegraphics[width=0.65\linewidth]{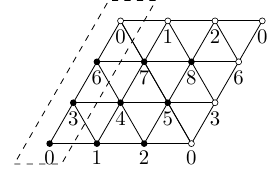}
    \caption{An example of the triangular lattice. 
    The integers denote the site indices, and the black and white sites with the same index are identified by the periodic boundary conditions. 
    We take the subsystem $A$ to be the set of sites enclosed by the dashed lines, which is used to calculate the entanglement entropy}
    \label{fig:pic_of_triangle}
\end{figure}

Analogously to the states $\ket{\bar{B}_n}$, one can define the following states:
\begin{equation}
    \ket{\Psi_n} = (\mathcal{Q}_0^-)^n\ket{\Uparrow} = \left(\sum_{j \in \Lambda} (S_j^-)^2\right)^n\ket{\Uparrow}.
\end{equation}
Since $\mathcal{Q}_0^-$ can be decomposed into $\mathcal{Q}_0^- = Q_X + Q_{\Lambda \backslash X}$ with $Q_X := \sum_{i\in X}(S^-_i)^2$ and $Q_{\Lambda \backslash X} := \mathcal{Q}_0^- - Q_X$ for any $X=\{ j, k, l \}$ forming an upward or downward triangle, we obtain
\begin{align}
    \ket{\Psi_n} &= \sum_{p=0}^n\binom{n}{p} (Q_X)^p\ket{+++}_{X}\nonumber\\ &\qquad\otimes (Q_{\Lambda \backslash X})^{n-p}\ket{++\cdots+}_{\Lambda \backslash X}.
\end{align}
Then, one can show that $\bm{S}_j\cdot(\bm{S}_k\times\bm{S}_l)(Q_X)^p\ket{+++}_{X}=0$ for any $X$ and $p$ in the same way as in the one-dimensional case. Therefore, $\ket{\Psi_n}$ is an eigenstate of $C^{\rm 2d}_\mathrm{SC}$ with eigenvalue $0$. Furthermore, since $D_j(S_j)^2$ acts only on one site, $D_j(S_j)^2\ket{\Psi_n} = D_j\ket{\Psi_n}$ can be shown in the same way as in the one-dimensional case. Therefore, each $\ket{\Psi_n}$ is an eigenstate of $H^{\rm 2d}_1$ in Eq. (\ref{eq:2-dim}) with eigenvalue $h(|\Lambda|-2n)+\sum_{j=1}^L D_j$, where by $|\Lambda|$ we denote the total number of sites in $\Lambda$.

To see whether $\ket{\Psi_n}$ are QMBS, we calculate entanglement entropies 
in all eigenstates of $H^{\rm 2d}_1$ in the symmetry sector with ${\cal S}^z=1$. In the calculation, we take the subsystem $A$ to be $\{0, 3, 6\}$ (see Fig. \ref{fig:pic_of_triangle} for the site labels). As shown in Fig. \ref{fig:ent_triangular}, the state $\ket{\Psi_{n=4}}$ has significantly lower entanglement entropy than the other states, indicating that this state is a scar state. 
We have also checked numerically that the entanglement entropy of $\ket{\Psi_n}$ is extremely low regardless of the choice of the subsystem $A$. 

\begin{figure}[pt]
    \centering
    \includegraphics[width = \linewidth]{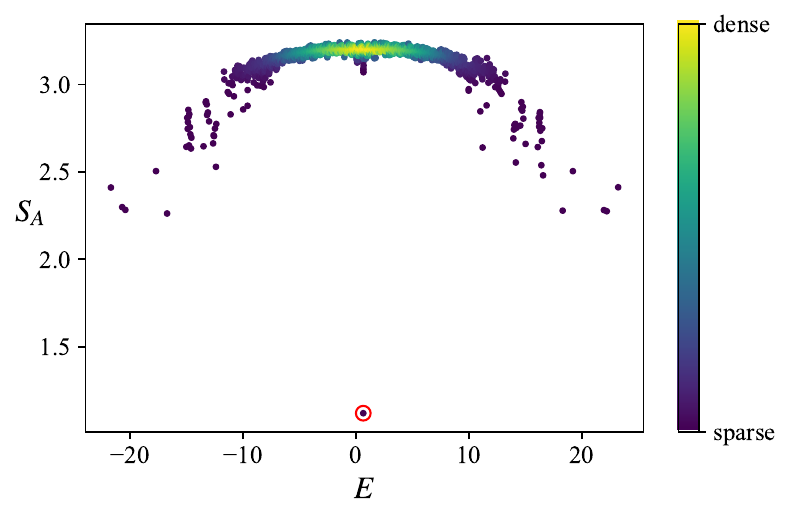}
    \caption{Entanglement entropies in all eigenstates of $H^{\rm 2d}_1 (h, \{D_j \})$ in Eq. (\ref{eq:2-dim}) for $L = 9, h = 1$ in the symmetry sector $\totS^z = 1$. Each $D_j$ is randomly chosen from $[-1, 1]$. {The density of data points is color coded.} The red circle indicate the scar state $\ket{\Psi_{n=4}}$.}
    \label{fig:ent_triangular}
\end{figure}

\section{Discussion}\label{sec_discussion}
We have constructed several examples of quantum spin models with two- and three-body interactions that exhibit QMBS using the two different methods: one based on integrable boundary states and the other focusing on towers of states generated by the raising and lowering operators. 
We demonstrated that the QMBS in the models behave differently from thermal states by comparing their spectral, dynamical, and entanglement properties with those of other typical states. 
The methods presented in this work can be used to systematically construct other models with QMBS. For example, using another combination of conserved charges $\{Q_{2k+1}\}_{k=1, 2, \ldots}$ in the first method, one can construct a family of new models in which an integrable boundary state is an exact zero-energy state. The second method discussed in Sec. \ref{sec:method} is a variant of existing methods, and in particular, it is similar to that of Tang {\it et al}.~\cite{tang2021multi}, where the authors constructed models with towers of scar states generated by irreducible tensor operators.  
However, their main focus is on spin-$1/2$ systems, whereas our work is primarily concerned with spin-$1$ systems.
Moreover, we demonstrated that our method allows for the construction of a two-dimensional model with QMBS. 

In future research, it would be interesting to construct non-integrable models that have multiple integrable boundary states as QMBS. To this end, we need to find a non-integrable Hamiltonian in which these integrable boundary states become degenerate. 
Since the idea of integrable boundary states can be traced back to those of integrable quantum field theories~\cite{ghoshal1994boundary}, it would also be interesting to construct non-integrable quantum field theories with QMBS by extending our method. In this regard, we note that QMBS in continuous models have also been discussed in previous studies~\cite{schindler2022exact, martin2022scar, liska2022holographic, cotler2022}. 

{
Another direction worth investigating is to apply our methods to open quantum and periodically driven systems. Previous studies have shown that there are some open or driven systems that fail to thermalize at late times~\cite{iemini2018boundary, buvca2019non, dutta2021out, tindall2021analytical, buvca2022algebraic}, where algebraic approaches were also widely used. Finally, it would also be interesting to extend the notion of integrable boundary states to such systems and construct models exhibiting nonthermalizing dynamics. In this respect, dissipative systems described by integrable Lindblad superoperators~\cite{ziolkowska2020yang, de2021constructing, de2023hidden} and integrable Floquet systems~\cite{gritsev2017integrable, vanicat2018integrable, lotkov2022floquet} may serve as a good starting point for constructing concrete examples.}



\section*{Acknowledgement}
We thank Eric Vernier for valuable discussions and for allowing us to include his proof of Theorem \ref{thm:f1}, and Kensuke Tamura for useful discussions. The numerical calculations of the entanglement entropy and the level spacing statistics were performed using QuSpin~\cite{quspin1, quspin2}. 
H. K. was supported by MEXT KAKENHI Grant-in Aid for Transformative Research Areas A ``Extreme Universe" No. JP21H05191.
H. K. was also supported in part by JSPS KAKENHI Grant No. JP18K03445, JP23H01093, and the Inamori Foundation. K. S. acknowledges the support of the Forefront Physics and Mathematics Program to Drive Transformation. 
Y. M. acknowledges the support from the GGI BOOST fellowship.

\newpage
\appendix 
\begin{widetext}
\section{Entanglement entropy of the ferromagnetic states}\label{appendix:ferro_ent_half}
We consider the general spin-$\sigma$ case ($\sigma = 1/2, 1, 3/2, \ldots$). The basis states $\ket{\zeta}$ $(\zeta = \sigma, \sigma-1, \ldots, -\sigma+1, -\sigma)$ on each site are defined such that $S^z\ket{\zeta} = \zeta\ket{\zeta}$ and $S^\pm \ket{\zeta} = \sqrt{\sigma (\sigma+1) -\zeta (\zeta \pm 1)} \ket{\zeta \pm 1}$. We denote the fully polarized state, i.e., $\ket{\sigma}^{\otimes L}$ by $\ket{\Uparrow}_L$. 
Then the ferromagnetic states as the $\mathrm{SU}(2)$ descendants of $\ket{\Uparrow}_L$ are defined as
\begin{align}
\ket{F_n}_L =\frac{1}{\sqrt{\mathcal{N}_\sigma(n, L)}} (\totS^-)^n\ket{\Uparrow}_L, 
\quad n=0,1,...,2 \sigma L
\end{align}
where $\mathcal{N}_\sigma(n, L)$ is a normalization constant such that $\braket{F_n}_L = 1$. One can show that
\begin{equation}\label{eq:norm_s1}
    \mathcal{N}_\sigma(n, L) = \frac{n!(2\sigma L)!}{(2\sigma L-n)!}
\end{equation}
for all $\sigma$, $L$, and $n$ using 
$[\totS^+, \totS^-] = 2\totS^z$ and the mathematical induction on $n$. The proof goes as follows. 

\medskip

First, it is obvious that $\norm{(\totS^-)^0\ket{\Uparrow}_L}^2 = \norm{\ket{\Uparrow}_L}^2 = 1$. On the other hand, we obtain $\frac{0!2\sigma L!}{(2\sigma L-0)!} = 1$. Thus, Eq. (\ref{eq:norm_s1}) is valid for $n = 0$. Next, we assume Eq. (\ref{eq:norm_s1}) is true for $n=k$. 
Then we can calculate $\mathcal{N}_\sigma(k+1, L)$ as
\begin{align}
    \mathcal{N}_\sigma(k+1, L) &= \expval{(\totS^+)^{k+1}(\totS^-)^{k+1}}{\Uparrow}_L \\
    &= \expval{(\totS^+)^k(\totS^- \totS^+ +2\totS^z)(S^-)^k}{\Uparrow}_L \\
    &= 2(\sigma L-k)\mathcal{N}_\sigma(k, L) + \expval{(\totS^+)^k \totS^-(\totS^- \totS^+ +2\totS^z)(\totS^-)^{k-1}}{\Uparrow}_L \\
    &= \cdots\nonumber\\
    &= 2\{(\sigma L-k)+(\sigma L-k+1)+\cdots+\sigma L\}\mathcal{N}_\sigma(k, L) \\
    &= \frac{(k+1)!(2\sigma L)!}{(2\sigma L-(k+1))!}
\end{align}
Thus, Eq. (\ref{eq:norm_s1}) is true for $n=k+1$. Hence Eq. (\ref{eq:norm_s1}) is true for all $n = 0, 1, \ldots, 2\sigma L$.

\medskip

To compute the entanglement entropy, we divide the whole chain into two subsystems, $A$ and $B$, with lengths $L_A$ and $L_B$, respectively. Since $\totS^- = \sum_{j=1}^LS_j^-$ is the sum of single-site operators, we can rewrite $\totS^-$ as $\totS^- = \totS^-_A+\totS_B^-$, where $\totS^-_A = \sum_{j\in A}S^-_j$ and $\totS^-_B = \sum_{j\in B}S^-_j$. 
Then, we see that $\ket{F_n}_L$ can be written in the Schmidt decomposition form:
\begin{align}\label{eq:FnEE}
    \ket{F_n}_L &= \frac{1}{\sqrt{\mathcal{N}_\sigma(n, L)}}(\totS_A^- + \totS_B^-)^n\ket{\Uparrow}_L \\
    &= \sum_{k=0}^n\frac{1}{\sqrt{\mathcal{N}_\sigma(n, L)}}\binom{n}{k}(\totS_A^-)^k (\totS_B^-)^{n-k}\ket{\Uparrow}_L \\
    &= \sum_{k=0}^n\sqrt{\frac{\mathcal{N}_\sigma(k, L_A)\mathcal{N}_\sigma(n-k, L_B) }{\mathcal{N}_\sigma(n, L)}}\binom{n}{k}\ket{F_k}_{L_A}\otimes \ket{F_{n-k}}_{L_B},
\end{align}
from which the entanglement entropy is obtained as
\begin{align}
    S_A(\ket{F_n}_L) &= -\sum_{k=0}^{n}\frac{\mathcal{N}_\sigma(k, L_A)\mathcal{N}_\sigma(n-k, L_B) }{\mathcal{N}_\sigma(n, L)}\binom{n}{k}^2 \ln \frac{\mathcal{N}_\sigma(k, L_A)\mathcal{N}_\sigma(n-k, L_B) }{\mathcal{N}_\sigma(n, L)}\binom{n}{k}^2\\
    &= -\sum_{k=0}^{n} \binom{2\sigma L}{n}^{-1}\binom{2\sigma L_A}{k}\binom{2\sigma L_B}{n-k} \ln \binom{2\sigma L}{n}^{-1}\binom{2\sigma L_A}{k}\binom{2\sigma L_B}{n-k}.
\end{align}

We now discuss the asymptotic behavior of $S_A(\ket{F_n}_L)$ for large $L$. In particular, we consider the case where $L$ is even, $L_A=L_B=L/2$, and $n = \sigma L$. 
In this case, one can rewrite Eq. (\ref{eq:FnEE}) as
\begin{align}
    S_A(\ket{F_{\sigma L}}_{L}) &= -\sum_{k=0}^{\sigma L} \binom{2\sigma L}{\sigma L}^{-1}\binom{\sigma L}{k}^2 \ln{\binom{2\sigma L}{\sigma L}^{-1}\binom{\sigma L}{k}^2} \\
    &= \binom{2\sigma L}{\sigma L}^{-1}\sum_{k=0}^{\sigma L}\left[\binom{\sigma L}{k}^2\ln \binom{2\sigma L}{\sigma L} - \binom{\sigma L}{k}^2 \ln \binom{\sigma L}{k}^2\right]. \label{eq:ent_ferro}
\end{align}
Using the Vandermonde identity, we can rewrite the first term in the bracket as
\begin{equation}\label{a16}
    \sum_{k=0}^{\sigma L}\binom{\sigma L}{k}^2\ln\binom{2\sigma L}{\sigma L}=\binom{2\sigma L}{\sigma L}\ln\binom{2\sigma L}{\sigma L}.
\end{equation}
On the other hand, we cannot directly compute the second term for finite $L$. However, in the large-$L$ limit, the dominant contribution to the summation comes from large $k$. Thus, we can apply the de Moivre-Laplace theorem and evaluate $\binom{n}{k}$ as
\begin{equation}
    \binom{n}{k} = \frac{2^n}{\sqrt{\frac{1}{2}n\pi}}\exp\left[-\frac{2(k-\frac{n}{2})^2}{n}\right]\times\left(1+\order{\frac{1}{\sqrt{n}}}\right)
\end{equation}
in the limit $n\to \infty$~\cite{ELIOT1990109, uspensky1937introduction, popkov2005logarithmic, diener2005higher}. 
Using this asymptotic expansion and the Vandermonde identity, we have
\begin{equation}\label{eq:ferro_asymp1}
    \sum_{k=0}^{\sigma L}\binom{\sigma L}{k}^2\ln\binom{\sigma L}{k}^2 = 
    \binom{2\sigma L}{\sigma L} \left[
    2 \sigma L\ln2 -2 \ln\sqrt{\frac{\pi\sigma L}{2}} - \frac{\sigma L}{2\sigma L -1} +\order{\frac{1}{\sqrt{L}}}
    \right].
\end{equation}
Substituting this and Eq. (\ref{a16}) into Eq. (\ref{eq:ent_ferro}), we find
\begin{align}
    S_A(\ket{F_{\sigma L}}_{L}) = \ln \binom{2\sigma L}{\sigma L} + \frac{\sigma L}{2\sigma L-1} - 2\left[\sigma L\ln2 - \ln \sqrt{\frac{\pi\sigma L}{2}}\right] + \order{\frac{1}{\sqrt{L}}}.
\end{align}
Finally, using Stirling's formula, $\ln n! = n\ln n - n + \frac{1}{2}\ln n + \frac{1}{2}\ln 2\pi+ \order{n^{-1}}$, we get
\begin{equation}
    \ln\binom{2\sigma L}{\sigma L} = 2\sigma L\ln 2 -\frac{1}{2} \ln \sigma L -\frac{1}{2}\ln \pi + \order{\frac{1}{L}},
\end{equation}
and hence we obtain the asymptotic form of Eq. (\ref{eq:ent_ferro}) as
\begin{equation}
    S_A(\ket{F_{\sigma L}}_L) = \frac{1}{2}\ln \sigma L + \frac{1}{2}\left(\ln \frac{\pi}{4}+1\right)+\order{\frac{1}{\sqrt{L}}}\qquad (L \gg 1).
\end{equation}

\section{SU(2) symmetry of \texorpdfstring{$H_3$}{Lg}}\label{appendix:ferromagnetic}
In this section, we provide a proof that $H_3$ in (\ref{H3}) has global ${\rm SU}(2)$ symmetry. 
As in the main text, we define the total spin operators by $\totS^\alpha = \sum_{j=1}^L S^\alpha_j$ ($\alpha= x, y, z$), where $S^\alpha_j$ is the spin-1 operator at site $j$ in Eq.~\eqref{spin1 operator}. They are the generators of the global ${\rm SU}(2)$, i.e., for any $U\in {\rm SU}(2)$ there exists $\{ \theta_\alpha \}$ such that $U = \exp(i\sum_{\alpha\in\{x, y, z\}}\theta_\alpha \totS^\alpha)$. Any $U$ can be decomposed into $U = 1 + X$, where $X$ is a polynomial in $\{\totS^x, \totS^y, \totS^z\}$. 
Thus, to prove that $[H_3, U]=0$ for all $U \in {\rm SU}(2)$, it suffices to show $[H_3, \totS^\alpha] = 0$ for all $\alpha\in\{x, y, z\}$. 

For simplicity, we introduce the SU(3) generators $T^a\coloneqq \lambda^a/2$ satisfying $[T^a, T^b] = \mathrm{i}f_{abc}T^c$, in terms of which $H_3/8$ is written as
\begin{equation}
    \frac{1}{8}H_3 = \sum_{j = 1}^Lf_{abc}T^a_jT^b_{j+1}T^c_{j+2},
\end{equation}
where the summation over repeated indices $a$, $b$, and $c$ is implied. 
Due to the tracelessness of the SU (2) generators, $S_j^x, S_j^y$, and $S_j^z$ can be written as linear combinations of the SU(3) generators. By $\Tr [T^a T^b]=\frac{1}{2}\delta_{ab}$~\cite{georgi2000lie}, we have $S^\alpha_j = 2\Tr_j [ S^\alpha_j T^u_j ] T^u_j$. Now we calculate the commutator 
\begin{align}
    \frac{1}{8}\left[H_3, \sum_{k=1}^LT^u_k\right] &= \left[\sum_{j=1}^L f_{abc} T^a_jT^b_{j+1}T^c_{j+2}, \sum_{k=1}^L T^u_k\right] \\
    &= \sum_{j=1}^Lf_{abc} \left( [T^a_j, T^u_j]T^b_{j+1}T^c_{j+2}+T^a_j[T^b_{j+1}, T^u_{j+1}]T^c_{j+2}+T^a_jT^b_{j+1}[T^c_{j+2}, T^u_{j+2}] \right) \\
    &= \mathrm{i} \sum_{j=1}^Lf_{abc} \left( f_{auv}T^v_jT^b_{j+1}T^c_{j+2}+f_{buv}T^a_jT^v_{j+1}T^c_{j+2}+f_{cuv}T^a_jT^b_{j+1}T^v_{j+2} \right) \\
    &= \mathrm{i} \sum_{j=1}^L (\underbrace{f_{vbc}f_{vua}+f_{avc}f_{vub}+f_{abv}f_{vuc}}_{(*)})\, T^a_jT^b_{j+1}T^c_{j+2}.
\end{align}
Then it follows from the Jacobi identity that $(*) = 0$, which yields the desired result $[H_3, \totS^\alpha] = 0$ ($\alpha= x, y, z$), i.e., the global SU(2) symmetry of $H_3$.

\section{Proof that \texorpdfstring{$\ket{\Psi_{\mathrm{VBS}}}$}{VBS} is an integrable boundary state}\label{H3VBS=0}

In this section, we prove that the 
VBS state $\ket{\Psi_{\mathrm{VBS}}}$ in Eq. (\ref{eq:spin-1 VBS}) is an integrable boundary state of the Sutherland model. The Hamiltonian of the model is defined as
\begin{equation}
    H_\mathrm{S} = - \sum_{i=1}^L \big( P_{i,i+1} -1 \big) ,
\end{equation}
commuting with $H_3$ in Sec. \ref{sec_AKLT+H3}. The quantum integrability of the model can be summarized by the transfer matrix $T(\lambda)$,
\begin{equation}
    T(\lambda) = \mathrm{Tr}_a \left( \prod_{j=1}^L R_{a,j} (\lambda) \right) ,
    \label{eq:Tlambda}
\end{equation}
with the R matrix 
\begin{equation}
    R_{a,j} (\lambda) = \frac{1}{\lambda + \text{i} } \big( \lambda + \text{i} P_{a,j} \big)
    \label{eq:RmatSutherland}
\end{equation}
satisfying the celebrated Yang--Baxter equation. The sub-index $a$ stands for the $3$-dimensional auxiliary space, which is traced over in Eq. \eqref{eq:Tlambda}, resulting in an operator acting only on the physical Hilbert space. From the Yang--Baxter equation, it is easy to deduce that $T(\lambda)$ is in involution,
\begin{equation}
    [T(\lambda) , T(\mu) ] = 0 , \quad \forall \lambda, \mu \in \mathbb{C} .
\end{equation}
Moreover, we have
\begin{equation}
    H_\mathrm{S} = - \text{i} \left. \frac{\partial}{\partial \lambda} \log T(\lambda) \right|_{\lambda=0} , \quad H_3 = -2\text{i} \left. \frac{\partial^2}{\partial \lambda^2} \log T(\lambda) \right|_{\lambda=0}  .
\end{equation}

As shown in \cite{Piroli2017WhatIA}, for an integrable boundary state $|\Psi_0 \rangle$ with even system size $L$, 
\begin{equation}
    Q_{2k+1} |\Psi_0 \rangle = 0 \quad (k=1,2,...) \quad \Leftrightarrow \quad T(\lambda) |\Psi_0 \rangle = \mathcal{I}\, T(\lambda)\, \mathcal{I}\, |\Psi_0 \rangle , 
    \label{eq:intboundarycond}
\end{equation}
where the parity (spatial inversion) operator
\begin{equation}
    \mathcal{I}  = \prod_{j=1}^{L/2} P_{j,L-j+1} .
\end{equation}
Our aim is to show that the VBS state $\ket{\Psi_{\mathrm{VBS}}}$ satisfies the condition \eqref{eq:intboundarycond} with the Sutherland model transfer matrix. To begin with, for any similarity transformation with local density
\begin{equation}
    U = \prod^L_{j=1} u_j ,
\end{equation}
the transfer matrix $T(\lambda)$ commutes with it, i.e.,
\begin{equation}
    U T(\lambda) U^{-1} = \mathrm{Tr}_a \left( \prod_{j=1}^L u_j R_{a,j} (\lambda) u_j^{-1} \right) = \mathrm{Tr}_a \left( \prod_{j=1}^L u_a^{-1} R_{a,j} (\lambda) u_a \right) = T(\lambda) ,
\end{equation}
using \eqref{eq:RmatSutherland}. We choose the similarity transformation to be
\begin{equation}
u_j = \frac{1}{\sqrt{2}}\begin{pmatrix}
1 & \mathrm{i} & 0 \\
0 & 0 & -\sqrt{2} \\
-1 & \mathrm{i} & 0
\end{pmatrix}_j ,
\end{equation}
such that after acting with $U^{-1}$ on $\ket{\Psi_{\mathrm{VBS}}}$, we get \cite{Piroli2017WhatIA}
\begin{equation}
U^{-1}\ket{\Psi_{\mathrm{VBS}}} = 3^{-L/2} \ket{\Psi_0}, \quad \ket{\Psi_0} = \sum_{ \{s\} } \Tr \left[B_{s_1} B_{s_2} \cdots B_{s_L} \right] | s_1, s_2, \cdots s_L \rangle ,
\end{equation}
where
\begin{equation}
    B_+ = \sigma^x , \quad B_0 = \sigma^y , \quad B_- = \sigma^z .
\end{equation}

Acting on $\ket{\Psi_0}$ with the parity operator, we have
\begin{equation}\label{eq:PPsi}
    \mathcal{I} \ket{\Psi_0} = \sum_{ \{s\} } \Tr \left[B_{s_1}^T B_{s_2}^T \cdots B_{s_L}^T \right] | s_1, s_2, \cdots s_L \rangle = \sum_{ \{s\} } (-1)^{n_0} \Tr \left[B_{s_1} B_{s_2} \cdots B_{s_L} \right] | s_1, s_2, \cdots s_L \rangle ,
\end{equation}
where ${}^T$ denotes transpose and $n_0$ counts the number of spin $0$ in state $| s_1, s_2, \cdots s_L \rangle$. Since the transfer matrix $T(\lambda)$ is a matrix product operator, we can express the state
\begin{equation}\label{eq:TTrB}
    T(\lambda) \Tr \left[B_{s_1} B_{s_2} \cdots B_{s_L} \right] | s_1, s_2, \cdots s_L \rangle = \Tr \left[C_{s_1} C_{s_2} \cdots C_{s_L} \right] | s_1, s_2, \cdots s_L \rangle ,
\end{equation}
where the matrices $C_s$ ($s=+,0,-$) are $6$-dimensional. 
In addition, one can show that there exists a similarity transformation $V$ such that
\begin{equation}
    C_{\pm}^T = V  C_{\pm} V^{-1} , \quad C_{0}^T = - V  C_{0} V^{-1} ,
\end{equation}
i.e.
\begin{equation}\label{eq:PTrC}
\begin{split}
    \mathcal{I} \Tr \left[C_{s_1} C_{s_2} \cdots C_{s_L} \right] | s_1, s_2, \cdots s_L \rangle & = \Tr \left[C_{s_1}^T C_{s_2}^T \cdots C_{s_L}^T \right] | s_1, s_2, \cdots s_L \rangle \\ 
    & = (-1)^{n_0} \Tr \left[C_{s_1} C_{s_2} \cdots C_{s_L} \right] | s_1, s_2, \cdots s_L \rangle .
\end{split}
\end{equation}
We are now ready to show that $|\Psi_0 \rangle$ is an integrable boundary state. Using Eqs. (\ref{eq:PPsi}), (\ref{eq:TTrB}), and (\ref{eq:PTrC}), we have 
\begin{equation}
    \begin{split}
        \mathcal{I}\, T(\lambda)\, \mathcal{I}\, |\Psi_0 \rangle & = \mathcal{I} \sum_{ \{s\} } (-1)^{n_0}  \Tr \left[C_{s_1} C_{s_2} \cdots C_{s_L} \right] | s_1, s_2, \cdots s_L \rangle \\
        & = \sum_{ \{s\} } (-1)^{2 n_0}  \Tr \left[C_{s_1} C_{s_2} \cdots C_{s_L} \right] | s_1, s_2, \cdots s_L \rangle \\ 
        & = T(\lambda) |\Psi_0 \rangle .
    \end{split}
\end{equation}
From this, we find 
\begin{equation}
    \mathcal{I}\, T(\lambda)\, \mathcal{I}\, | \Psi_{\mathrm{VBS}} \rangle = 3^{-L/2} U\, \mathcal{I}\, T(\lambda)\, \mathcal{I}\, | \Psi_0 \rangle = 3^{-L/2} U T(\lambda) | \Psi_0 \rangle = T(\lambda) | \Psi_{\mathrm{VBS}} \rangle ,
\end{equation}
which shows that the VBS state is an integrable boundary state of the Sutherland model.

\section{Eigenenergy of ferromagnetic states}
\label{energy_of_ferromagnetic}
In this section, we compute the energy of the ferromagnetic states, which are eigenstates of the Hamiltonian $H(t)$ in Eq. (\ref{AKLT+H3}). The ferromagnetic states are defined as $\ket{F_n} = (\totS^-)^n\ket{F_0}$, where $\ket{F_0} = \ket{++\cdots+}$ and $0 \le n \le 2L$. 
They are the same as $\ket{F_n}_L$ with $\sigma=1$ in Appendix \ref{appendix:ferro_ent_half} up to a normalization factor.
Because $H_{\mathrm{AKLT}}$ and $H_3$ commute with $\totS^-$ (see Appendix \ref{appendix:ferromagnetic}), all ferromagnetic states $\ket{F_n}$ have the same energy. Thus it suffices to consider the energy of $\ket{F_0}$.

Since $\ket{F_0}$ is invariant under any permutation, we have $P_{j, j+1, j+2} \ket{F_0} =  P^\dagger_{j, j+1, j+2} \ket{F_0} = \ket{F_0}$. Thus we obtain
\begin{equation}
    H_3\ket{F_0} = \sum_{j=1}^L(P_{j, j+1, j+2} - P^\dagger_{j, j+1, j+2})\ket{F_0} = 0,
\end{equation}
and hence $H_3\ket{F_n} = H_3\ket{F_0} = 0$ for all $n$. Moreover, since $(\bm{S}_j\cdot\bm{S}_{j+1})\ket{F_0} = \ket{F_0}$ for all $j$, we obtain
\begin{equation}
    H_\mathrm{AKLT}\ket{F_0} = \sum_{j=1}^L\left[1+\frac{1}{3}+\frac{2}{3}\right]\ket{F_0} = 2L\ket{F_0}.
\end{equation}
Therefore, the ferromagnetic states are eigenstates of $H(t)$ with eigenvalue $2L$. A similar calculation shows that the eigenenergy of the ferromagnetic states in the inhomogeneous model (\ref{inhomogeneous AKLT+H3}) is $2\sum_j c_j$.

\section{Proof of \texorpdfstring{$C_\mathrm{SC}\ket{\Psi_\mathrm{VBS}} = 0$}{Lg}}\label{appendix:CSCVBS=0.proof}
In this section, we show 
\begin{equation}
    C_\mathrm{SC}\ket{\Psi_\mathrm{VBS}} = 0.
\end{equation}
First, we rewrite $\ket{\Psi_\mathrm{VBS}}$ as 
\begin{equation}
    \ket{\Psi_\mathrm{VBS}} = 3^{-L/2}\Tr[ {\sf A}_1 {\sf A}_2 \cdots {\sf A}_L ],\quad {\sf A}_j = \mqty(\ket{0}_j & -\sqrt{2}\ket{+}_j \\ \sqrt{2}\ket{-}_j & -\ket{0}_j).
\end{equation}
Next, we introduce a convenient representation of $C_\mathrm{SC}$:
\begin{equation}\label{eq:expand_CSC}
    \bm{S}_j\cdot(\bm{S}_{j+1}\times \bm{S}_{j+2})= \frac{\mathrm{i}}{2}\tau_{\alpha\beta\gamma}S_j^\alpha S_{j+1}^\beta S_{j+2}^\gamma,
\end{equation}
where $\alpha, \beta, \gamma\in\{+, -, z\}$ and $\tau$ is the totally anti-symmetric tensor with $\tau_{+-z}=1$. 
Then, by acting on $\ket{\Psi_\mathrm{VBS}}$ with each term, we obtain
\begin{equation}
    \tau_{\alpha\beta\gamma}S_j^\alpha S_{j+1}^\beta S_{j+2}^\gamma\ket{\Psi_\mathrm{VBS}} = 3^{-L/2}\Tr [ {\sf A}_1 \cdots {\sf A}_{j-1} {\sf B}_{j, j+1, j+2}^{\alpha\beta\gamma} {\sf A}_{j+3}\cdots ], 
\end{equation}
where
\begin{align}
    {\sf B}_{j, j+1, j+2}^{+-z}&=\sqrt{2}\mqty(\ket{+0-} & -\ket{-+-} \\ 2\ket{00-} - \ket{+--} & -\sqrt{2}\ket{0-+}), \\
    {\sf B}_{j, j+1, j+2}^{+z-}&=\sqrt{2}\mqty(0 & -\ket{++-} \\ -\ket{+--} & \sqrt{2}(\ket{+-0}-\ket{0+-})), \\
    {\sf B}_{j, j+1, j+2}^{-z+}&=\sqrt{2}\mqty(\sqrt{2}(\ket{0-+}-\ket{-+0}) & \ket{-++} \\ \ket{--+} & 0), \\
    {\sf B}_{j, j+1, j+2}^{-+z}&=\sqrt{2}\mqty(\sqrt{2}\ket{0+-} & \ket{-++} - 2\ket{00+} \\ \ket{-+-} & -\sqrt{2}\ket{-0+}), \\
    {\sf B}_{j, j+1, j+2}^{z+-}&= \sqrt{2}\mqty(-\sqrt{2}\ket{+0-} & 2\ket{+00} - \ket{++-} \\ -\ket{-+-} & \sqrt{2}\ket{-+0}), \\
    {\sf B}_{j, j+1, j+2}^{z-+}&= \sqrt{2}\mqty(-\sqrt{2}\ket{+-0} & \ket{+-+} \\ \ket{--+} - 2\ket{-00} & \sqrt{2}\ket{-0+}),
\end{align}
and ${\sf B}_{j, j+1, j+2}^{\alpha\beta\gamma} = 0$ for any other choice of $(\alpha, \beta, \gamma)$. Therefore, we obtain
\begin{equation}
    \bm{S}_j\cdot(\bm{S}_{j+1}\times \bm{S}_{j+2})\ket{\Psi_\mathrm{VBS}} = 3^{-L/2}\Tr [ {\sf A}_1 \cdots {\sf A}_{j-1} {\sf B}_{j, j+1, j+2} {\sf A}_{j+3}\cdots ],
\end{equation}
where
\begin{align}
    {\sf B}_{j, j+1, j+2} &= \frac{\mathrm{i}}{2}\sum_{\alpha, \beta, \gamma\in \{+, -, z\}} {\sf B}_{j, j+1, j+2}^{\alpha\beta\gamma} \nonumber\\
    &= \mathrm{i}\mqty(\ket{0-+}-\ket{-+0}+\ket{0+-}-\ket{+-0} & \sqrt{2}(\ket{+00}-\ket{00+}+\ket{-++}-\ket{++-}) \\ \sqrt{2}(\ket{00-} - \ket{-00} + \ket{--+} - \ket{+--}) & \ket{+-0} + \ket{-+0} -\ket{0-+}-\ket{0+-}).
\end{align}
Then, we can decompose ${\sf B}_{j, j+1, j+2}$ into
\begin{equation}
    {\sf B}_{j, j+1, j+2} = {\sf A}_j {\sf C}_{j+1, j+2} - {\sf C}_{j, j+1} {\sf A}_{j+2},
\end{equation}
where
\begin{equation}
    {\sf C}_{j, j+1} = \mathrm{i}\mqty(\ket{+-}+\ket{-+}-\ket{00} & 0 \\ 0 & \ket{+-}+\ket{-+}-\ket{00})
\end{equation}
Therefore, we have
\begin{equation}
    C_\mathrm{SC}\ket{\Psi_\mathrm{VBS}} = 3^{-L/2}\sum_{j=1}^L \left\{
    \Tr [ {\sf A}_1 \cdots {\sf A}_{j-1} {\sf A}_j {\sf C}_{j+1, j+2} {\sf A}_{j+3} \cdots {\sf A}_L] - \Tr [ {\sf A}_1 \cdots {\sf A}_{j-1} {\sf C}_{j, j+1} {\sf A}_{j+2} {\sf A}_{j+3} \cdots {\sf A}_L ]
    \right\} = 0.
\end{equation}

\section{Zero energy states of \texorpdfstring{$C_\mathrm{SC}$}{Lg}}\label{appendix:AB_proof}
In this section, we prove that $\ket{\bar{A}_n}$, $\ket{\bar{B}_n}$, and $\ket{K_{0,p}}$ are annihilated by $C_\mathrm{SC}$. We also derive a lower bound on the number of zero-energy states of $C_\mathrm{SC}$. 

\subsection{\texorpdfstring{$C_\mathrm{SC}\ket{\bar{A}_n}=0$}{Lg}}
To prove $C_\mathrm{SC}\ket{\bar{A}_n} = 0$, we first prove the following: 
\begin{theorem}\label{thm:f1}
Consider $C_\mathrm{SC}$ with an even number of sites $L$. Then
the following relations hold:
\begin{align}
    & C_\mathrm{SC}\mathcal{O}^-_\pi\ket{\Uparrow} = 0, \label{a1}\\
    & C_\mathrm{SC}(\mathcal{O}^-_\pi)^2\ket{\Uparrow} = 0, \label{a2}\\
    & [\,\mathcal{O}^-_\pi, [\, \mathcal{O}^-_\pi, [\, \mathcal{O}^-_\pi, C_\mathrm{SC}\,] \,] \,] = 0. \label{a3}
\end{align}
\end{theorem}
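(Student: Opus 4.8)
The plan is to reduce all three identities to a single master formula for the iterated commutators of $C_\mathrm{SC}$ with $\mathcal{O}_\pi^-$, and then read off (a1)--(a3) from it. Throughout I write $c_j \equiv \bm{S}_j\cdot(\bm{S}_{j+1}\times\bm{S}_{j+2})$, so that $C_\mathrm{SC}=\sum_{j=1}^L c_j$, and $\mathrm{ad}_X Y \equiv [X,Y]$. Two elementary inputs come first. The first is that $C_\mathrm{SC}\ket{\Uparrow}=0$: in the expansion $c_j=\tfrac{\mathrm{i}}{2}\tau_{\alpha\beta\gamma}S_j^\alpha S_{j+1}^\beta S_{j+2}^\gamma$ of Eq.~(\ref{eq:expand_CSC}), total antisymmetry of $\tau$ forces a raising operator $S^+$ into every surviving term, and $S^+\ket{+}=0$ kills it. The second is a \emph{locality} observation: since $S_k^-$ commutes with $c_j$ for $k\notin\{j,j+1,j+2\}$, and since the commutator of an operator supported on $\{j,j+1,j+2\}$ with any $S_k^-$ stays supported there, one may replace $\mathcal{O}_\pi^-$ by its restriction $Q_j=(-1)^j(S_j^- - S_{j+1}^- + S_{j+2}^-)$ at \emph{every} nesting level, i.e. $(\mathrm{ad}_{\mathcal{O}_\pi^-})^k c_j = (\mathrm{ad}_{Q_j})^k c_j$.

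The key simplification comes from the $\mathrm{SU}(2)$-scalar nature of each $c_j$. Writing $Q_j=(-1)^j(\Sigma_j^- - 2S_{j+1}^-)$ with $\Sigma_j^-:=S_j^-+S_{j+1}^-+S_{j+2}^-$, the three-site total lowering operator $\Sigma_j^-$ commutes with $c_j$ (the scalar triple product is rotation invariant) and with $S_{j+1}^-$; by the Jacobi identity it therefore commutes with every nested commutator $(\mathrm{ad}_{S_{j+1}^-})^m c_j$. Hence the $\Sigma_j^-$ piece of $Q_j$ drops out at each step, and I obtain the master formula
\begin{equation}
(\mathrm{ad}_{\mathcal{O}_\pi^-})^k\, C_\mathrm{SC}=\sum_{j=1}^L\bigl(-2(-1)^j\bigr)^k (\mathrm{ad}_{S_{j+1}^-})^k c_j .
\end{equation}
Establishing this formula carefully---in particular checking that the replacement $Q_j\to -2(-1)^jS_{j+1}^-$ survives all three nestings---is the main conceptual step.

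Relation (a3) is then immediate, since it is exactly $(\mathrm{ad}_{\mathcal{O}_\pi^-})^3 C_\mathrm{SC}=0$ and each summand in the master formula contains $(\mathrm{ad}_{S_{j+1}^-})^3 c_j$. Because $c_j$ is linear in the site-$(j+1)$ spin operators, and because the space $\mathrm{span}\{S^+,S^z,S^-\}$ carries the adjoint (spin-$1$) representation of $\mathrm{SU}(2)$ on which the lowering generator is nilpotent of order three, $(\mathrm{ad}_{S^-})^3=0$, every term vanishes. This is the cleanest part of the argument and requires no explicit matrix work.

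For (a1) and (a2) I first use $C_\mathrm{SC}\ket{\Uparrow}=0$ together with the identity $C_\mathrm{SC}(\mathcal{O}_\pi^-)^n=\sum_{k}\binom{n}{k}(-1)^k(\mathcal{O}_\pi^-)^{n-k}(\mathrm{ad}_{\mathcal{O}_\pi^-})^k C_\mathrm{SC}$ to see that (a1) is equivalent to $(\mathrm{ad}_{\mathcal{O}_\pi^-}C_\mathrm{SC})\ket{\Uparrow}=0$ and, feeding (a1) back in, that (a2) is equivalent to $(\mathrm{ad}_{\mathcal{O}_\pi^-}^2 C_\mathrm{SC})\ket{\Uparrow}=0$. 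By the master formula these amount to $\sum_j(-1)^j(\mathrm{ad}_{S_{j+1}^-})c_j\ket{\Uparrow}=0$ and $\sum_j(\mathrm{ad}_{S_{j+1}^-})^2 c_j\ket{\Uparrow}=0$. A short finite computation on the three active sites gives $(\mathrm{ad}_{S_{j+1}^-})c_j\ket{\Uparrow}\propto \ket{0_{j+2}}-\ket{0_j}$ and $(\mathrm{ad}_{S_{j+1}^-})^2 c_j\ket{\Uparrow}\propto \ket{0_{j+1}0_{j+2}}-\ket{0_j0_{j+1}}$, where $\ket{0_{\cdots}}$ denotes $\ket{\Uparrow}$ with the indicated sites lowered to $\ket{0}$. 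Summing over $j$ with periodic boundary conditions, both expressions telescope to zero; for (a1) the alternating sign $(-1)^j$ matches the wrapped boundary term precisely when $L$ is even, which is where the evenness hypothesis enters. This local computation, though routine, is the only genuine grind and is the step most prone to sign errors.
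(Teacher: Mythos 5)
Your proof is correct, and it takes a genuinely different route from the paper's. The paper establishes (\ref{a1}) and (\ref{a2}) by brute force in the one- and two-magnon sectors: it writes out $C_\mathrm{SC}\ket{j}$, $C_\mathrm{SC}\ket{j,k}$ and $C_\mathrm{SC}\ket{\bar{j}}$ explicitly, treating the magnon separations $r\geq 3$, $r=2$, $r=1$ and the doubly-lowered site as separate cases, and checks that each sum over $j$ telescopes; for (\ref{a3}) it computes the six nested commutators $O_j^{\alpha\beta\gamma}$ and observes that they coincide pairwise, so the antisymmetric contraction with $\tau_{\alpha\beta\gamma}$ cancels. Your master formula $(\mathrm{ad}_{\mathcal{O}_\pi^-})^k C_\mathrm{SC}=\sum_j\bigl(-2(-1)^j\bigr)^k(\mathrm{ad}_{S_{j+1}^-})^k c_j$ --- resting on the rotation invariance $[\Sigma_j^-,c_j]=0$ of the scalar triple product, propagated through the nestings by the Jacobi identity and the locality of $c_j$ --- replaces all of this with a single structural observation. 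It makes (\ref{a3}) immediate from the nilpotency $(\mathrm{ad}_{S^-})^3=0$ on the adjoint module (each summand vanishes individually, rather than cancelling in pairs as in the paper), and it shrinks (\ref{a1}) and (\ref{a2}) to the action of $c_j$ on $\ket{0_{j+1}}$ and $\ket{-_{j+1}}$ over the three active sites, followed by telescoping. I have verified the local computations: $c_j\ket{0_{j+1}}=\mathrm{i}(\ket{0_{j+2}}-\ket{0_j})$ and $(\mathrm{ad}_{S_{j+1}^-})^2 c_j\ket{\Uparrow}=-2\mathrm{i}(\ket{0_{j+1}0_{j+2}}-\ket{0_j 0_{j+1}})$, so both sums telescope as claimed, with the even-$L$ hypothesis entering exactly where you say, in matching $(-1)^{j\pm 2}$ across the periodic wrap in (\ref{a1}). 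The reduction of (\ref{a1}) and (\ref{a2}) to $(\mathrm{ad}_{\mathcal{O}_\pi^-})^k C_\mathrm{SC}\ket{\Uparrow}=0$ via $C_\mathrm{SC}\ket{\Uparrow}=0$ and the binomial commutator expansion is also sound. Your approach is cleaner and arguably more illuminating, since it isolates the two algebraic facts responsible for the theorem (SU(2) invariance of $c_j$ and nilpotency of the adjoint lowering operator); the paper's is more elementary but requires a longer and more error-prone case analysis.
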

\begin{proof}

We first prove Eq. (\ref{a1}). We can write $\mathcal{O}^-_\pi\ket{\Uparrow}$ in the $S^z$ basis as
\begin{equation}
    \mathcal{O}^-_\pi\ket{\Uparrow} = \sqrt{2}\sum_{j=1}^L(-1)^j \ket{j},
\end{equation}
where we have used the short-hand notation $\ket{j}\coloneqq\ket{+\cdots + 0_j +\cdots+}$. 
Then we find $C_\mathrm{SC}\ket{j} = \mathrm{i}(\ket{j-2}-\ket{j+2})$, which yields
\begin{equation}
    C_\mathrm{SC}\mathcal{O}^-_\pi\ket{\Uparrow} = \sqrt{2}\mathrm{i}\sum_{j=1}^L(-1)^j(\ket{j-2}-\ket{j+2}) =0.
\end{equation}
We can also see this as follows. First note that $C_\mathrm{SC}\mathcal{O}^-_\pi\ket{\Uparrow}$ is odd under the site-centered inversion ${\cal I}_\mathrm{s}$. We then note that it is invariant under translation by two sites ${\cal T}^2$. However, there is no single-magnon state (a linear combination of $\ket{j}$) that is compatible with these constraints. Thus, $C_\mathrm{SC}\mathcal{O}^-_\pi\ket{\Uparrow}$ must vanish identically. 

Next, we consider Eq. (\ref{a2}). By acting on $\ket{\Uparrow}$ with $\mathcal{O}_\pi^-$ twice, we get
\begin{equation}
    (\mathcal{O}_\pi^-)^2\ket{\Uparrow} = 2\left(2\sum_{1\leq j<k\leq L}(-1)^{j+k}\ket{j, k}+\sum_{j=1}^L\ket{\bar{j}}\right),
\end{equation}
where $\ket{j, k} \coloneqq \ket{+\cdots + 0_j + \cdots + 0_k + \cdots +}$ and $\ket{\bar{j}}\coloneqq \ket{+\cdots + -_j +\cdots+}$. We can rewrite the first term as
\begin{equation}\label{eq:ket|j, k>}
    \sum_{j< k}(-1)^{j+k}\ket{j, k} = 
        \sum_{j=1}^{L}\left(\sum_{r=1}^{L/2-1}(-1)^r\ket{j, j+r} + (-1)^{\frac{L}{2}}\ket{j, j+\frac{L}{2}}\right)
    .
\end{equation}
We now examine the action of $C_\mathrm{SC}$ on $\ket{j, j+r}$. For $r\geq 3$, we get 
\begin{equation}\label{eq:r>3}
    C_\mathrm{SC}\ket{j, j+r} = \mathrm{i}(\ket{j-2, j+r}-\ket{j+2, j+r}+\ket{j, j+r-2}-\ket{j, j+r+2})\quad(r\geq 3).
\end{equation}
Therefore, we obtain
\begin{equation}\label{eq:r>3 2}
    C_\mathrm{SC}\sum_{j=1}^L\ket{j, j+r} = 0 \quad (r\geq 3).
\end{equation}
Next, we consider the case with $r = 2$. In this case, the action of $C_\mathrm{SC}$ is
\begin{align}
    &\quad C_\mathrm{SC}\ket{j, j+2} \nonumber\\ 
    &= \mathrm{i}(\ket{j-2, j+2} - 2\ket{j-1, j+2} + \ket{j+1, j+2} + \ket{\overline{j}} - \ket{\overline{j+2}} + 2\ket{j, j+3} - \ket{j, j+1}  -\ket{j, j+4}), 
\end{align} 
which yields
\begin{equation}\label{eq:r=2}
    C_\mathrm{SC}\sum_{j=1}^L\ket{j, j+2} = 0.
\end{equation}
Similarly, for $\ket{j, j+1}$, we obtain
\begin{align}
    C_\mathrm{SC}\ket{j, j+1}
    = \mathrm{i}(\ket{j-2, j+1}-\ket{j-1, j+1}+2\ket{\overline{j+1}}-2\ket{\overline{j}}+\ket{j, j+2}-\ket{j, j+3}). 
\end{align}
Thus, we have
\begin{equation}\label{eq:r=1}
    \sum_{j=1}^LC_\mathrm{SC}\ket{j, j+1} = 0.
\end{equation}
We next examine $\ket{\overline{j}}$. Acting with $C_\mathrm{SC}$ on $\ket{\overline{j}}$, we obtain
\begin{align}
    C_\mathrm{SC}\ket{\overline{j}} 
    = \mathrm{i}(\ket{j-2, j} - 2\ket{j-1, j} + 2\ket{j, j+1} + \ket{j, j+2}).
\end{align}
Thus, we have
\begin{equation}\label{eq:CSCketj=0}
    \sum_{j=1}^LC_\mathrm{SC}\ket{\bar{j}} = 0.
\end{equation}
Putting this all together, we get  
\begin{equation}
    C_\mathrm{SC}(\mathcal{O}_\pi^-)^2\ket{\Uparrow} = 0.
\end{equation}

Finally, we consider Eq. (\ref{a3}). Since $C_\mathrm{SC}$ is a sum of terms of the form $S_j^\alpha S_{j+1}^\beta S_{j+2}^\gamma$ $(\{\alpha, \beta, \gamma\} = \{+, -, z\})$ and $\mathcal{O}^-_\pi$ is a linear combination of $S^-_k$, the nested commutator $[\, \mathcal{O}^-_\pi, [\, \mathcal{O}^-_\pi, [\, \mathcal{O}^-_\pi, C_\mathrm{SC}\, ] \,] \,]$ is a sum of $S_j^-S_{j+1}^-S_{j+2}^-$. Let $O^{\alpha\beta\gamma}_j \coloneqq [\, \mathcal{O}^-_\pi, [\, \mathcal{O}^-_\pi, [\, \mathcal{O}^-_\pi, S_j^\alpha S_{j+1}^\beta S_{j+2}^\gamma\, ] \, ] \,]$ for $\{\alpha, \beta, \gamma\} = \{+, -, z\}$, we obtain
\begin{align}
    O^{+-z}_j &= 3[\, \mathcal{O}^-_\pi, [\, \mathcal{O}^-_\pi, S^+_j \, ] \,] \, S^-_{j+1}\, [\, \mathcal{O}^-_\pi, S_{j+2}^z \,] \nonumber\\
    &= 3\cdot (-2)\cdot(-1)^{2j}S_j^-\cdot S_{j+1}^-\cdot (-1)^{j+2}S_{j+2}^- \\
    &= -6(-1)^{j+2}S_j^-S_{j+1}^-S_{j+2}^-, \\
    O^{-+z}_j &= 3S_j^- [\, \mathcal{O}^-_\pi, [\, \mathcal{O}^-_\pi, S^+_{j+1}\, ] \,] \, [\, \mathcal{O}^-_\pi, S_{j+2}^z \,] \nonumber\\
    &= 3\cdot  S_j^-\cdot(-2)\cdot(-1)^{2(j+1)} S_{j+1}^-\cdot (-1)^{j+2}S_{j+2}^- \\
    &= -6(-1)^{j+2}S_j^-S_{j+1}^-S_{j+2}^-. 
\end{align}
Hence, $O^{+-z}_j = O^{-+z}_j$. Similarly, one can show that $O^{-z+}_j=O^{+z-}_j$ and $O^{z+-}_j = O^{z-+}_j$. Therefore, we have
\begin{equation}
    [\, \mathcal{O}^-_\pi, [\, \mathcal{O}^-_\pi, [\, \mathcal{O}^-_\pi, C_\mathrm{SC} \, ] \, ] \, ] = \frac{\mathrm{i}}{2}\sum_{j=1}^L\sum_{\alpha, \beta, \gamma}\tau_{\alpha\beta\gamma}O^{\alpha\beta\gamma}_j = 0.
\end{equation}
\end{proof}

From Theorem \ref{thm:f1}, $C_\mathrm{SC}\ket{\bar{A}_n} = 0$ follows immediately. 

\subsection{\texorpdfstring{$C_\mathrm{SC}\ket{\bar{B}_n}=0$}{Lg}}
Next we prove that $C_\mathrm{SC}\ket{\bar{B}_n} = 0$. To this end, we first prove the following:
\begin{theorem}\label{thm:f2}
The following relations are true.
\begin{align}
    &[\, \mathcal{Q}^-_0, C_\mathrm{SC} \,]\ket{\Uparrow} = 0, \label{b1}\\
    &[\, \mathcal{Q}^-_0, [\, \mathcal{Q}^-_0, C_\mathrm{SC} \,] \,]\ket{\Uparrow} = 0, \label{b2}\\
    &[\, \mathcal{Q}^-_0, [\, \mathcal{Q}^-_0, [\, \mathcal{Q}^-_0, C_\mathrm{SC} \,] \,] \,] = 0. \label{b3}
\end{align}
\end{theorem}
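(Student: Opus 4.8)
The plan is to reduce everything to the single-site adjoint action $\mathcal{D}(X) \coloneqq [(S^-)^2, X]$, exploiting that $\mathcal{Q}_0^- = \sum_k (S^-_k)^2$ is a sum of single-site operators while $C_\mathrm{SC}$ is a sum of three-site terms $\tfrac{\mathrm{i}}{2}\tau_{\alpha\beta\gamma} S^\alpha_j S^\beta_{j+1} S^\gamma_{j+2}$ with $\{\alpha,\beta,\gamma\}$ a permutation of $\{+,-,z\}$. First I would record the elementary identities $\mathcal{D}(S^-) = 0$, $\mathcal{D}(S^z) = 2(S^-)^2$, and $\mathcal{D}(S^+) = -2(S^-S^z + S^z S^-)$, and then establish the crucial nilpotency $\mathcal{D}^2(S^\alpha) = 0$ for every $\alpha \in \{+,-,z\}$, which follows from $\mathcal{D}((S^-)^2) = 0$ and $\mathcal{D}(S^- S^z) = \mathcal{D}(S^z S^-) = 2(S^-)^3 = 0$ using $(S^-)^3 = 0$ for spin $1$. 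Since $\mathrm{ad}_{\mathcal{Q}_0^-}$ acts as a derivation touching only the three relevant sites, its $n$-fold application to a three-site product expands by the Leibniz/multinomial rule, and $\mathcal{D}^2 = 0$ truncates this to distributions in which no site is differentiated more than once. These three relations are exactly the restricted spectrum generating algebra of order two; together with $C_\mathrm{SC}\ket{\Uparrow} = 0$ they yield $C_\mathrm{SC}\ket{\bar B_n} = 0$ for all $n$ via $C_\mathrm{SC}(\mathcal{Q}_0^-)^n = \sum_m \binom{n}{m}(-1)^m (\mathcal{Q}_0^-)^{n-m}\,\mathrm{ad}^m_{\mathcal{Q}_0^-}(C_\mathrm{SC})$.

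For the operator identity~\eqref{b3} I would apply the truncated Leibniz rule at $n=3$: the only surviving distribution differentiates each of the three sites exactly once, giving $\mathrm{ad}^3_{\mathcal{Q}_0^-}(S^\alpha_j S^\beta_{j+1} S^\gamma_{j+2}) = 6\,\mathcal{D}(S^\alpha_j)\mathcal{D}(S^\beta_{j+1})\mathcal{D}(S^\gamma_{j+2})$. Because every nonvanishing chirality term contains exactly one factor $S^-$ and $\mathcal{D}(S^-) = 0$, each product vanishes, so $\mathrm{ad}^3_{\mathcal{Q}_0^-}(C_\mathrm{SC}) = 0$ identically. This parallels the structure of Eq.~\eqref{a3} in Theorem~\ref{thm:f1}.

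For~\eqref{b1} and~\eqref{b2} I would evaluate on $\ket{\Uparrow}$, where each site carries $\ket{+}$. The two facts $S^+\ket{+} = 0$ and $\mathcal{D}(S^-)\ket{+} = 0$ pin the surviving terms: the $S^+$ factor must be differentiated (else it annihilates $\ket{+}$) and the $S^-$ factor must stay bare. For~\eqref{b1} (one derivative) this forces differentiation of the $S^+$ site alone, and for~\eqref{b2} (two derivatives) it forces differentiation of the $S^+$ and $S^z$ sites, so in each case exactly one term per ordering survives, with value fixed by $\mathcal{D}(S^+)\ket{+} = -2\sqrt{2}\ket{0}$, $\mathcal{D}(S^z)\ket{+} = 4\ket{-}$, $S^-\ket{+} = \sqrt{2}\ket{0}$, and $S^z\ket{+} = \ket{+}$. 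The resulting state depends only on the position of the $S^z$ factor within the triple, so the six orderings collapse into three pairs related by the transposition $+ \leftrightarrow -$; this transposition flips the sign of $\tau_{\alpha\beta\gamma}$ but leaves the scalar prefactor and the output state unchanged, so each pair cancels and the expression vanishes term by term in $j$. As a cross-check I would note the shorter route using earlier results: $C_\mathrm{SC}\ket{\Uparrow} = 0$ and $\mathcal{Q}_0^-\ket{\Uparrow} = 2\sum_k \ket{\bar k}$ reduce~\eqref{b1} to $\sum_k C_\mathrm{SC}\ket{\bar k} = 0$, which is exactly~\eqref{eq:CSCketj=0}, and an analogous manipulation reduces~\eqref{b2} to $C_\mathrm{SC}(\mathcal{Q}_0^-)^2\ket{\Uparrow} = 0$.

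The conceptual content is entirely in the first step: once $\mathcal{D}^2 = 0$ and $\mathcal{D}(S^-) = 0$ are in hand, all three relations follow without genuine difficulty. The main obstacle is therefore organizational—keeping the multinomial bookkeeping and the $\tau$-signs straight in~\eqref{b2}, where one must verify that the paired orderings truly produce identical output states \emph{and} equal scalar prefactors, so that the cancellation is exact rather than a mere sign coincidence. A secondary point requiring care is that $\mathrm{ad}_{\mathcal{Q}_0^-}$ must be treated as a single site-summed derivation, so that terms acting on two distinct sites simultaneously never arise and the truncation by $\mathcal{D}^2 = 0$ is legitimate.
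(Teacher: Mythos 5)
Your proof is correct, and for the first two identities it takes a genuinely different route from the paper's. For Eqs.~(\ref{b1}) and (\ref{b2}) the paper does not expand any commutators: it first shows that the coherent state $e^{\beta\mathcal{Q}_0^-}\ket{\Uparrow}=\bigotimes_j(\ket{+}_j+2\beta\ket{-}_j)$ is annihilated by $C_\mathrm{SC}$ --- each summand of the chirality is a sum of terms of the form $(S^+_jS^-_k-S^-_jS^+_k)S^z_l$, and the antisymmetric combination kills the two-site product $\ket{\psi_\beta}_j\otimes\ket{\psi_\beta}_k$ --- and then conjugates by $e^{-\beta\mathcal{Q}_0^-}$ and reads off the vanishing of every coefficient in the Baker--Campbell--Hausdorff expansion. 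That argument yields $\mathrm{ad}^n_{\mathcal{Q}_0^-}(C_\mathrm{SC})\ket{\Uparrow}=0$ for all $n$ in one stroke and with no sign bookkeeping, at the price of having to guess the product form of the coherent state; your direct Leibniz expansion, organized around $\mathcal{D}(S^-)=0$, the single-site nilpotency $\mathcal{D}^2=0$, and the pairwise cancellation of orderings related by $+\leftrightarrow-$ (which I have checked does give equal prefactors and identical output states in both (\ref{b1}) and (\ref{b2})), reaches the same conclusion more laboriously but more transparently. Your cross-check reducing (\ref{b1}) to $\sum_j C_\mathrm{SC}\ket{\bar{j}}=0$, i.e.\ Eq.~(\ref{eq:CSCketj=0}), is a third valid shortcut. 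For the operator identity (\ref{b3}) your argument and the paper's are essentially the same Leibniz computation: the paper isolates the derivative distribution $(2,0,1)$ over $(S^+,S^-,S^z)$ and kills it with $[\,\mathcal{Q}_0^-,[\,\mathcal{Q}_0^-,S^+_j\,]\,]=-8(S^-_j)^3=0$, which is precisely your nilpotency $\mathcal{D}^2(S^+)=0$ in different packaging, while you note that $\mathcal{D}^2=0$ leaves only the distribution $(1,1,1)$, which contains the vanishing factor $\mathcal{D}(S^-)$. Both routes are complete.
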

\begin{proof}

We consider the coherent state $\ket{\beta}$, which can be written as 
\begin{equation}
    \ket{\beta} \propto e^{\beta\mathcal{Q}_0^-}\ket{\Uparrow} =  \prod_{j=1}^L(1+\beta(S_j^-)^2)\ket{\Uparrow} \eqqcolon \bigotimes_{j=1}^L\ket{\psi_\beta}_j ,
\end{equation}
where $\ket{\psi_\beta}_j = \ket{+}_j+2\beta\ket{-}_j$. One can prove that $\ket{\beta}$ is annihilated by $C_\mathrm{SC}$. This can be seen as follows. First note that $S^+_j\ket{\psi_\beta}_j = 2\sqrt{2}\beta\ket{0}_j$ and $S^-_j\ket{\psi_\beta}_j = \sqrt{2}\ket{0}_j$, and hence $(S^+_j S^-_k -S^-_j S^+_k) \ket{\psi_\beta}_j\otimes\ket{\psi_\beta}_{k}=0$. Next, we note that each summand of $C_\mathrm{SC}$ can be cast in the form:
\begin{align}
    \bm{S}_j\cdot(\bm{S}_{j+1}\times\bm{S}_{j+2}) 
    =\frac{\mathrm{i}}{2} \left\{
    (S^+_j S^-_{j+1} - S^-_j S^+_{j+1}) S^z_{j+2} + (S^+_{j+1} S^-_{j+2} - S^-_{j+1} S^+_{j+2}) S^z_j + (S^+_{j+2} S^-_j - S^-_{j+2} S^+_j) S^z_{j+1}
    \right\}.
\end{align}
From this, it is clear that each summand annihilates $\ket{\beta}$, and hence  $C_\mathrm{SC}e^{\beta\mathcal{Q}_0^-}\ket{\Uparrow} = 0$. 
Acting with $e^{-\beta\mathcal{Q}_0^-}$ from the left on both sides of this equation and expanding it by the Baker-Campbell-Hausdorff formula, we have
\begin{equation}
    \left(C_\mathrm{SC} -\beta [\, \mathcal{Q}^-_0, C_\mathrm{SC} \,] + \frac{\beta^2}{2}[\, \mathcal{Q}^-_0, [\, \mathcal{Q}^-_0, C_\mathrm{SC} \,] \,] + \cdots\right)\ket{\Uparrow} = 0,
\end{equation}
which proves Eqs. (\ref{b1}) and (\ref{b2}) since $\beta\in\mathbb{C}$ can be taken arbitrarily. 

Finally, we show Eq. (\ref{b3}). Let $Q_j^{\alpha\beta\gamma}=[\, \mathcal{Q}^-_0, [\, \mathcal{Q}^-_0, [\, \mathcal{Q}^-_0,S_j^\alpha S_{j+1}^\beta S_{j+2}^\gamma \,] \,] \,]$ $(\alpha, \beta, \gamma = +, -, \mathrm{or}\,\, z)$. Then, we obtain
\begin{align}
    Q_j^{+-z}&=3[\, \mathcal{Q}^-_0, [\, \mathcal{Q}^-_0, S^+_j \,] \,]S^-_{j+1}[\, \mathcal{Q}^-_0, S_{j+2}^z \,]\nonumber\\
    &= 3\cdot(-8(S_j^-)^3)\cdot S_{j+1}^-\cdot 2(S_{j+2}^-)^2 =0.
\end{align}
In the same way, one can show $Q_j^{-+z}=Q_j^{-z+}=Q_j^{+z-}=Q_j^{z+-}=Q_j^{z-+}=0$. Therefore, 
\begin{equation}
    [\, \mathcal{Q}^-_0, [\, \mathcal{Q}^-_0, [\, \mathcal{Q}^-_0,C_\mathrm{SC} \,] \,] \,]=\frac{\mathrm{i}}{2}\sum_{j=1}^L\sum_{\alpha, \beta, \gamma}\tau_{\alpha\beta\gamma}Q_j^{\alpha\beta\gamma}=0.
\end{equation}

\end{proof}

From Theorem \ref{thm:f2}, $C_\mathrm{SC}\ket{\bar{B}_n} = 0$ follows immediately. 

\subsection{\texorpdfstring{$C_\mathrm{SC}\ket{K_{0, p}}=0$}{Lg}}
Here we prove that $C_\mathrm{SC}\ket{K_{0, p}}=0$. From Eq. (\ref{eq:K_to_phi}), each $\ket{K_{0, p}}$ can be expressed as a linear combination of $\ket{\Phi_n} = \sum_{j=1}^L\ket{j, j+n}$, where $n = 0, 1, \ldots, \lfloor L / 2 \rfloor$ and $\ket{j, j}=\ket{\bar{j}}$. However, it has already been shown by Eqs. (\ref{eq:r>3 2}), (\ref{eq:r=2}), (\ref{eq:r=1}), and (\ref{eq:CSCketj=0}) that these states are annihilated by $C_\mathrm{SC}$. Therefore, $\ket{K_{0, p}}$ are zero-energy states of $C_\mathrm{SC}$.

\subsection{Lower bound on the number of zero-energy states}
In this subsection, we derive a lower bound on the number of zero-energy states of $C_\mathrm{SC}$. We follow the argument in Ref. \cite{Turner2}, where the authors obtained a lower bound on the number of zero-energy states of the PXP model. The key point is that the site-centered inversion ${\cal I}_\mathrm{s}$ anticommutes with the Hamiltonian $C_\mathrm{SC}$ in Eq. (\ref{eq:spin-1 SC}), i.e., ${\cal I}_\mathrm{s} C_\mathrm{SC} = - C_\mathrm{SC} {\cal I}_\mathrm{s}$. 

Let ${\cal H}$ be the Hilbert space of a spin-$1$ chain of length $L$. This Hilbert space can be decomposed as ${\cal H} = {\cal K}_\mathrm{e} \oplus {\cal K}_\mathrm{o}$, where ${\cal K}_\mathrm{e} = \{ \ket{\psi} \in {\cal H} \,|\, {\cal I}_\mathrm{s} \ket{\psi} = \ket{\psi} \}$ and ${\cal K}_\mathrm{o} = \{ \ket{\psi} \in {\cal H} \,|\, {\cal I}_\mathrm{s} \ket{\psi} = -\ket{\psi} \}$. 
It follows from $\{ {\cal I}_\mathrm{s}, C_\mathrm{SC} \}=0$ that if $\ket{\psi} \in {\cal K}_\mathrm{e/o}$ then $C_\mathrm{SC} \ket{\psi} \in {\cal K}_\mathrm{o/e}$. Therefore, $C_\mathrm{SC}$ can be written in block-matrix form as
\begin{align}
    C_\mathrm{SC} = 
    \left(
    \begin{matrix}
    O & D^\dagger_\mathrm{SC} \\
    D_\mathrm{SC} & O
    \end{matrix}
    \right).
\end{align}
Here the operator $D_\mathrm{SC}$ can be regarded as a linear map from ${\cal K}_\mathrm{e}$ to ${\cal K}_\mathrm{o}$. Let ${\rm Im}\, D_\mathrm{SC}$ and ${\rm Ker}\, D_\mathrm{SC}$ be the image and kernel of $D_\mathrm{SC}$, respectively. It is clear that if $\ket{\psi} \in {\rm Ker}\,D_\mathrm{SC}$, then $\ket{\psi}$ is annihilated by $C_\mathrm{SC}$. Thus, the dimension of ${\rm Ker}\,D_\mathrm{SC}$ gives a lower bound on the number of zero-energy states. We now apply the rank-nullity theorem to estimate ${\rm dim}\, {\rm Ker}\, D_\mathrm{SC}$. The theorem implies that
\begin{align}
    {\rm dim}\, {\rm Im}\, D_\mathrm{SC} + {\rm dim}\, {\rm Ker}\, D_\mathrm{SC} = {\rm dim}\, {\cal K}_\mathrm{e}, 
\end{align}
Since ${\rm dim}\, {\rm Im}\, D_\mathrm{SC} \le {\rm dim}\, {\cal K}_\mathrm{o}$, we have 
\begin{align}\label{eq:rank-nullity}
    {\rm dim}\, {\rm Ker}\, {\cal D}_\mathrm{SC} \ge {\rm dim }\, {\cal K}_\mathrm{e} - {\rm dim }\, {\cal K}_\mathrm{o},
\end{align}
which gives a lower bound on the number of zero-energy states. 

Before deriving a general expression for the RHS of Eq. (\ref{eq:rank-nullity}), let us consider a simple example that illustrates the strategy. For $L=3$, the Hilbert space ${\cal H}$ is spanned by $27$ states. Consider the inversion about site $2$. Then ${\cal K}_\mathrm{e}$ is spanned by the states of the forms $\ket{s_1, s_2, s_1}$ and $\ket{s_1, s_2, s_3} + \ket{s_3, s_2, s_1}$ ($s_1 < s_3$). The number of these states amounts to $9+9=18$. On the other hand, ${\cal K}_\mathrm{o}$ is spanned by the states of the form $\ket{s_1, s_2, s_3} - \ket{s_3, s_2, s_1}$ ($s_1 < s_3$), the number of which amounts to $9$. Thus, ${\rm dim }\, {\cal K}_\mathrm{e} - {\rm dim }\, {\cal K}_\mathrm{o}=9$. 

The above example clearly illustrates that the difference between the dimensions of even and odd subspaces counts the number of product states invariant under ${\cal I}_\mathrm{s}$. Let ${\cal N}_L$ be the number of such states for the $L$-site system. Let ${\cal Z}_L$ be the exact number of zero-energy states of $C_\mathrm{SC}$. It is easy to see that ${\cal N}_L = 3^{\frac{L+1}{2}}$ for $L$ odd and ${\cal N}_L = 3^{\frac{L+2}{2}}$ for $L$ even. These results can be summarized as ${\cal Z}_L \ge {\cal N}_L = 3^{\lfloor \frac{L}{2} \rfloor +1}$, which proves that ${\cal Z}_L$ grows exponentially with the system size. Table \ref{tab:comparison} shows the comparison between ${\cal Z}_L$ obtained by exact diagonalization and the bound ${\cal N}_L$. Clearly, ${\cal Z}_L$ grows more rapidly than ${\cal N}_L$. We expect that a better lower bound can be obtained by considering other symmetries of the Hamiltonian, but leave this possibility for future work. We note in passing that a lower bound on ${\cal Z}_L$ for general spin quantum number $\sigma$ can also be derived in a similar manner; the result is ${\cal Z}_L \ge (2\sigma+1)^{\lfloor \frac{L}{2} \rfloor +1}$. 
\begin{table}[]
    \centering
    \begin{tabular}{c|ccccccccc}
    \hline\hline
        $L$ & $3$ & $4$ & $5$ & $6$ & $7$ & $8$ & $9$ & $10$ & $11$\\
        \hline
        ${\cal Z}_L$ & $11$ & $35$ & $45$ & $127$ & $141$ & $435$ & $473$ & $1451$ & $1553$ \\
        ${\cal N}_L$ & $9$ & $27$ & $27$ & $81$ & $81$ & $243$ & $243$ & $729$ & $729$\\
        \hline\hline 
    \end{tabular}
    \caption{The number of zero-energy states (${\cal Z}_L$) and the bound (${\cal N}_L$) up to $L=11$ sites.}
    \label{tab:comparison}
\end{table}

\section{Time evolution of a superposition of \texorpdfstring{$\ket{\Bar{A}_n}$}{Lg} and \texorpdfstring{$\ket{\Bar{B}_n}$}{Lg}}\label{appendix:dynamics}
We have discussed the dynamics of the coherent states $\ket{\alpha}$ and $\ket{\beta}$ in Sec. \ref{sec_CSC}.  
In this appendix, we consider the dynamics from a more complex initial state. As we have seen, the system with the Hamiltonian $H_2$ in Eq. (\ref{perm_ham}) has two types of scar states: $\ket{\Bar{A}_n}$ and $\ket{\Bar{B}_n}$. We will show that their superpositions exhibit more complex dynamics than those in the main text. 

To be specific, let us consider the following superposition of $\ket{\alpha}$ and $\ket{\beta}$:
\begin{equation}
    \ket{\xi} =\frac{1}{Z} (u\ket{\alpha} + v\ket{\beta}),
\end{equation}
where $u, v \in \mathbb{C}$ are arbitrary constants and $Z$ is the normalization constant. The fidelity between $\ket{\xi}$ and the time evolved state $\ket{\xi (t)} = e^{-\mathrm{i} H_2 t} \ket{\xi}$ can be expressed as
\begin{equation}\label{eq:fidelity_mix}
    \mathcal{F}(t) = |\braket{\xi}{\xi (t)}| =\frac{1}{Z^2} \left|\abs{u}^2\braket{\alpha}{\alpha(t)} + u^*v\braket{\alpha}{\beta(t)} + uv^*\braket{\beta}{\alpha(t)} + \abs{v}^2\braket{\beta}{\beta(t)}\right|,
\end{equation}
where $\ket{\alpha (t)} = e^{-\mathrm{i} H_2 t} \ket{\alpha}$ and $\ket{\beta (t)} = e^{-\mathrm{i} H_2 t} \ket{\beta}$. To get a more explicit expression for $\mathcal{F}(t)$, let us compute the overlaps. Along the same lines as in Eqs. (\ref{eq:fidelityB}, \ref{eq:fidelityA}), one can calculate the first and fourth overlaps in Eq. (\ref{eq:fidelity_mix}) as
\begin{equation}\label{eq:overlap1}
    \braket{\alpha}{\alpha(t)} = e^{-\mathrm{i}t (hL+\mathcal{D})} \left(\frac{1+\abs{\alpha}^2e^{\mathrm{i}ht}}{1+\abs{\alpha}^2}\right)^{2L}, \quad
    \braket{\beta}{\beta(t)} = e^{-\mathrm{i}t (hL+\mathcal{D})} \left(\frac{1+4|\beta|^2e^{2\mathrm{i}ht}}{1+4\abs{\beta}^2}\right)^L,
\end{equation}
where $\mathcal{D} =\sum^L_{j=1} D_j$ and we have used the fact that $\ket{\bar{A}_n}$ and $\ket{\Bar{B}_n}$ are eigenstates of $H_2$ with eigenvalues $h(L-n) + \mathcal{D}$ and $h(L-2n) + \mathcal{D}$, respectively. Next, let us compute the second and third overlaps in Eq. (\ref{eq:fidelity_mix}). To this end, we consider the overlap between $\ket{\bar{A}_m}$ and $\ket{\Bar{B}_n}$. Since they are eigenstates of $\mathcal{S}^z$ with eigenvalues $L-m$ and $L-2n$, respectively, it is easy to see that $\braket{\bar{A}_m}{\bar{B}_n}\propto \delta_{m, 2n}$. 
The overlap for $m=2n$ is calculated as
\begin{align}
    \braket{\bar{A}_{2n}}{\bar{B}_n} &= \bra{\Uparrow}(\mathcal{O}^+_\pi)^{2n} (\mathcal{Q}^-_0)^n\ket{\Uparrow} \nonumber\\
    &= \frac{(2n)!}{2^n}\bra{\Uparrow}\left(\sum_{1\leq j_1<\cdots <j_n\leq L}(S^+_{j_1})^2(S^+_{j_2})^2\cdots(S^+_{j_n})^2\right) n!\left(\sum_{1\leq l_1<\cdots<l_n\leq L}(S^-_{l_1})^2(S^-_{l_2})^2\cdots(S^-_{l_n})^2\right)\ket{\Uparrow} \nonumber\\
    &= 2^n\cdot (2n)!\cdot n! \cdot \binom{L}{n},
\end{align}
from which we obtain
\begin{align}\label{eq:overlap2}
    \braket{\alpha}{\beta(t)} 
    =  \frac{e^{-\mathrm{i}t (hL+\mathcal{D})} (1+2(\alpha^*)^2\beta e^{2\mathrm{i}ht})^L}{(1+\abs{\alpha}^2)^L(1+4\abs{\beta}^2)^\frac{L}{2}},
    \quad
    \braket{\beta}{\alpha(t)} = \frac{e^{-\mathrm{i}t (hL+\mathcal{D})} (1+2\alpha^2\beta^* e^{2\mathrm{i}ht})^L}{(1+\abs{\alpha}^2)^L(1+4\abs{\beta}^2)^\frac{L}{2}}.
\end{align}
Plugging Eqs. (\ref{eq:overlap1}) and (\ref{eq:overlap2}) into Eq. (\ref{eq:fidelity_mix}) yields
\begin{equation}
\begin{split}
    \mathcal{F}(t) = & \frac{1}{Z^2}\left| \abs{u}^2\left(\frac{1+\abs{\alpha}^2e^{\mathrm{i}ht}}{1+\abs{\alpha}^2}\right)^{2L} + \abs{v}^2\left(\frac{1+4|\beta|^2e^{2\mathrm{i}ht}}{1+4\abs{\beta}^2}\right)^L \right. \\ & \left. + u^*v\frac{(1+2(\alpha^*)^2\beta e^{2\mathrm{i}ht})^L}{(1+\abs{\alpha}^2)^L(1+4\abs{\beta}^2)^\frac{L}{2}} + uv^*\frac{(1+2\alpha^2\beta^* e^{2\mathrm{i}ht})^L}{(1+\abs{\alpha}^2)^L(1+4\abs{\beta}^2)^\frac{L}{2}}  \right| ,
\end{split}
\end{equation}
with
\begin{equation}
    Z^2 = \abs{\abs{u}^2+\abs{v}^2+\frac{2\Re{u^*v(1+2(\alpha^*)^2\beta)}}{(1+\abs{\alpha}^2)^L(1+4\abs{\beta}^2)^\frac{L}{2}}}.
\end{equation}

Figure \ref{fig:a+b} shows $\mathcal{F}(t)$ for two different choices of $(\alpha, \beta)$. 
We can see that the fidelity shows revivals with period $2\pi/h$, which is the smallest common period of the two fidelity oscillations shown in Figs. \ref{fig:F(t)_b} and \ref{fig:fidelity_A}. Clearly, the trend of the curves is more complicated than the previous ones, with small peaks originating from the interference terms $\braket{\alpha}{\beta (t)}$ and $\braket{\beta}{\alpha (t)}$.
\begin{figure}[tbph]
    \centering
    \includegraphics[width=0.6\linewidth]{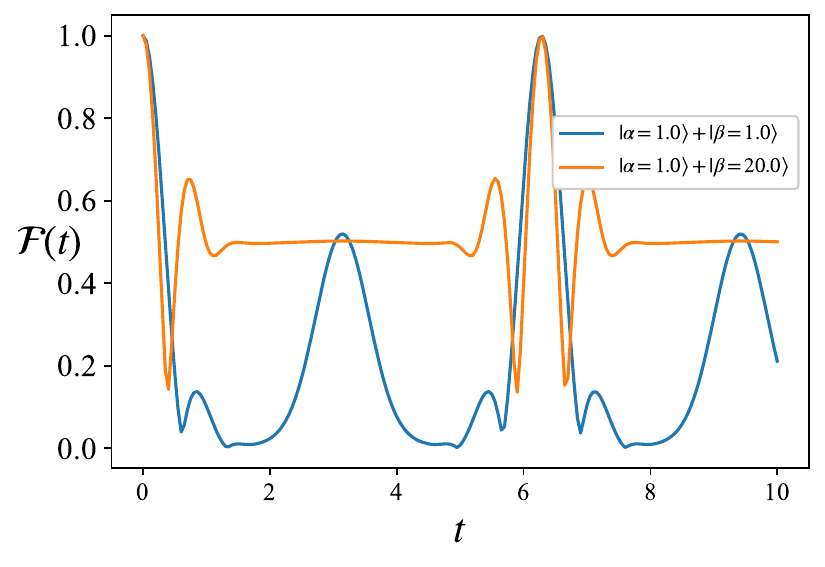}
    \caption{The dynamics of the fidelity of the superposition of two coherent states $\ket{\alpha}$ and $\ket{\beta}$ driven by $H_2$ (Eq. (\ref{perm_ham})) with $h = 1, L = 8$, and $D_j (j = 1, 2, \ldots, L)$ chosen randomly from $[-1, 1]$. The period of the revivals is $2\pi/h$. 
    }
    \label{fig:a+b}
\end{figure}

\end{widetext}

\bibliography{sanada_scar}

\end{document}